\documentclass[11pt]{article}
\setlength{\parindent}{0in}
\setlength{\textheight}{9in}
\setlength{\textwidth}{6.6in}
\setlength{\oddsidemargin}{0.0in}
\setlength{\evensidemargin}{0.0in}
\addtolength{\topmargin}{-1in}
\setlength{\parskip}{0.1in}
\usepackage{graphicx}
\usepackage{psfrag}
\usepackage{varwidth}
\usepackage{amsmath,amssymb,amsfonts,amsthm}
\usepackage{booktabs}
\usepackage{natbib}
\usepackage{comment}
\usepackage{color}
\def \IR{\hbox{{\rm I}\kern-.2em\hbox{{\rm R}}}}
\newcommand{\mv}[1]{{\boldsymbol{\mathrm{#1}}}}
\newcommand{\proper}{\mathsf}
\newcommand{\pV}{\proper{V}}
\DeclareMathOperator{\diag}{diag}
\DeclareMathOperator{\R}{\mathbb{R}}
\newcommand{\scal}[2]{\left\langle {#1},\,{#2} \right\rangle}
\usepackage{amsthm}

\usepackage{algorithm}
\usepackage{algpseudocode}

 \usepackage[normalem]{ulem}
 \definecolor{darkgreen}{rgb}{0, 0.8, 0.7}

\newtheorem*{prop*}{Proposition}

\newcommand{\E}{{ {\rm E}}}
 \graphicspath{{figs/}{./}}
 \theoremstyle{remark}

\usepackage{color}

\parindent 0cm
\title{Linear Mixed-Effects Models for Non-Gaussian Repeated Measurement Data}

\author{{\"O}zg{\"u}r Asar$^1$, David Bolin$^{2}$, Peter J. Diggle$^3$, Jonas Wallin$^{4, *}$}
\date{%
    $^1$Department of Biostatistics and Medical Informatics, \\ Ac{\i}badem Mehmet Ali Ayd{\i}nlar University, {\.I}stanbul, Turkey\\%
    $^2$Mathematical Sciences, Chalmers University of Technology and the University of Gothenburg, Gothenburg, Sweden\\
		$^3$CHICAS, Lancaster Medical School, Lancaster University, Lancaster, United Kingdom\\
		$^4$Department of Statistics, Lund University, Lund, Sweden \\[2ex]
		$^*$ jonas.wallin@stat.lu.se
}

\begin{document}

\maketitle

{\bf Abstract}

 We consider the analysis of continuous repeated measurement outcomes 
 that are collected through time, also known as longitudinal data. 
 A standard framework for analysing data of this kind is a linear Gaussian mixed-effects model
within which the outcome variable can be decomposed into fixed-effects, 
 time-invariant and time-varying random-effects, 
 and measurement noise. We  develop methodology that, for the first time,
allows any combination of
these stochastic components to be non-Gaussian, using multivariate 
 Normal variance-mean mixtures. We estimate parameters by maximum likelihood,
 implemented with a novel, computationally efficient stochastic gradient algorithm. 
We obtain standard error estimates 
 by inverting the observed Fisher-information matrix, and obtain the predictive distributions
for the random-effects in both filtering (conditioning on past and current data) and smoothing (conditioning on all data)
contexts.  To implement these
 procedures, we introduce an {\tt R} package, {\tt ngme}. We 
 re-analyse two data-sets, from cystic fibrosis and nephrology research,
 that were previously analysed using Gaussian linear mixed effects models.

 {\bf Keywords:}  heavy-tailedness; latent-effects; 
                 longitudinal data;  multivariate analysis; 
								non-Normal distributions; skewness; stochastic approximation

 \section{Introduction}

 This paper is concerned with the analysis of real-valued {\it repeated measurement} data 
 that are collected through time, also known as {\it longitudinal} data. 
 The basic data-structure is that repeated measurements of an {\it outcome variable} are made on each of a number 
 of {\it subjects} at each of a number of {\it follow-up times}, not necessarily 
 the same for all subjects, with explanatory variables or {\it covariates} of 
 two kinds also available: {\it baseline} covariates attached to subjects; 
 and {\it longitudinal} covariates attached to individual outcomes. 
 We write $Y_{ij}$ for the $j$th measurement of the outcome on the $i$th subject, 
 $t_{ij}$ for the  corresponding follow-up time, $\mv{a}_i$ for the vector of 
 baseline covariates associated with the $i$th subject and $\mv{x}_{ij}$ 
 for the vector of longitudinal covariates attached to the
 $j$th measurement on the $i$th subject.  
 
 Figure \ref{fig:PANSS} shows a simple example, taken from a randomised trial 
 of drug treatments for schizophrenia, in which the outcome variable is a measure of each 
 subject's mental state at times 0, 1, 2, 4, 6 and 8 weeks after randomisation 
 to one of two different drug therapies, placebo vs. active treatment. Here, $a_i$ is 
 a scalar treatment indicator, whilst the general pattern of decreasing responses over time 
 suggests a quadratic trend, hence $\mv{x}_{ij} = [1 \ t_{ij} \ t_{ij}^2]^{\top}$. 
 The figure shows data from three subjects in each of the two treatment arms; 
 the complete trial included 88 subjects in the placebo group and 
 435 subjects distributed over five active  treatment arms \citep{henderson2000}. This example shows 
 several features that are typical of studies of this kind: 
 the outcome variable, PANSS (Positive and Negative Syndrome Scale;
Kay et al, 1987), \nocite{kay1987}
 is an imperfect measurement instrument for the underlying process of interest, namely each subject's
state of mental health at the time of measurement; the outcome variable
 exhibits stochastic variation both between subjects and between follow-up times within subjects; 
 questions of interest include {\it estimation} of parameters that define 
 the mean response profiles of the underlying process over time and 
 {\it prediction} of the trajectory of the process for an individual subject.
  
      Most of the very extensive literature on statistical methods for data of this kind uses 
	either a Gaussian model or, if the inferential goal is restricted to parameter estimation, a set of
	estimating equations; text-book accounts include \cite{verbeke2001}, 
      \cite{diggle2002} and \cite{fitzmaurice2011}.
	In this paper, we present methodology for handling repeated measurement data that exhibit 
	long-tailed or skewed departure from Gaussian distributional assumptions. In Section \ref{sec:literature}
      we  review the literature 
     on existing approaches to  Gaussian and non-Gaussian modelling of real-valued repeated measurement
     data. In Section \ref{sec:model}, 
	we set out our proposed class of non-Gaussian models. 
       In Section \ref{sec:inference}, we describe a computationally 
	fast method for likelihood-based inference. Section \ref{sec:examples} describes two applications. In the first
	of these the scientific focus is on estimation of mean response profiles, whilst in the second
	the focus is on real-time individual-level prediction. Section \ref{sec:package} describes 
	our {\tt R} package, {\tt ngme}, that implements the new methodology. In Section \ref{sec:discussion}, 
	we discuss some potential extensions, including models for categorical or count 
  data \citep{molenberghs+verbeke2005} and joint modelling of repeated measurement and time-to-event data \citep{rizopoulos2012}.
  
 \begin{figure}[t]
 \centering
 \includegraphics[width = 0.45\linewidth]{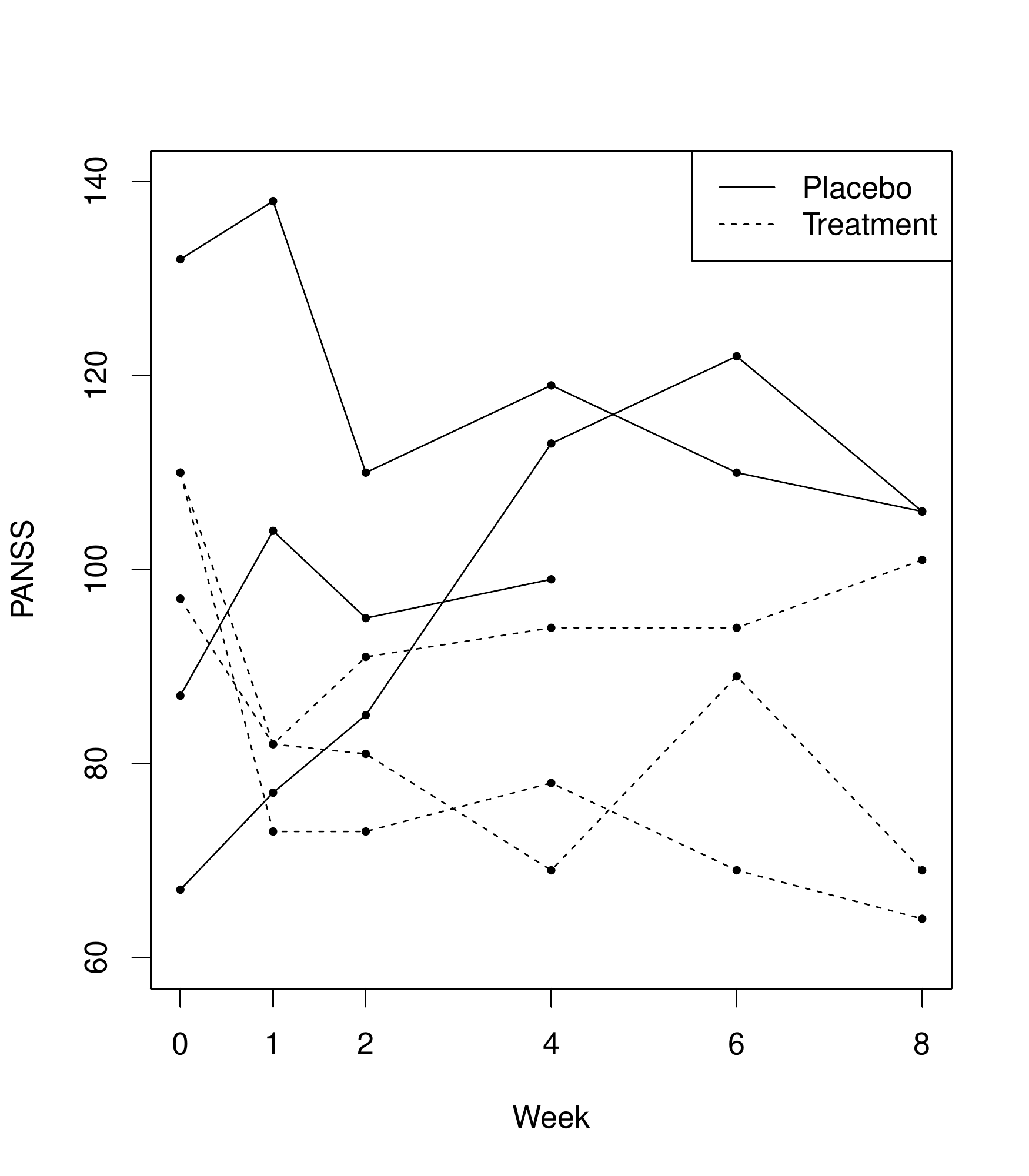}
 \caption{Data on six patients in a randomised trial of drug therapies for  schizophrenia. 
 The outcome variable, PANSS (Positive and Negative Syndrome Scale), is a questionnaire-based 
 instrument for assessing the severity of  a  patient's condition. Solid and dashed lines 
 identify patients from two different treatment arms, placebo and active treatment, respectively.}
 \label{fig:PANSS}
 \end{figure}
   
 \section{Literature review} 
 \label{sec:literature}

  \subsection{Gaussian models for real-valued repeated measurement data}  
	\label{sec:gaussian}

     \citet{laird1982} were the first authors to consider modelling repeated measurements as noisy versions of 
	underlying signals that can be decomposed into fixed effects,  
	$\mv{a}_i^{\top} \mv{\alpha} + \mv{x}_{ij}^{\top} \mv{\beta}$, 
	and random effects, $\mv{d}_{ij}^{\top} \mv{U}_i$,
     leading to the mixed-effects model
     \begin{equation}
     Y_{ij} =  \mv{a}_i^{\top} \mv{\alpha} + \mv{x}_{ij}^{\top} \mv{\beta} + 
		\mv{d}_{ij}^{\top} \mv{U}_i + \sigma Z_{ij}; \quad j = 1, \ldots, n_i, \quad i = 1, \ldots, m, 
    \label{eq:laird-ware}
    \end{equation}
     where $n_i$ is the number of measurements on 
	the $i$th subject, $m$ is the number of subjects, the individual-level
	$\mv{U}_i$ are mutually independent, zero-mean multivariate Normally distributed,
     $\mv{U}_i \sim {\rm MVN}(\mv{0}, \mv{\Sigma})$, and the $Z_{ij}$
    are mutually independent ${\rm N}(0,1)$. 

	A widely-used special case of (\ref{eq:laird-ware}) is the ``random-intercept-and-slope'' 
  model in which each subject's random effect is a linear function of time.
	This model  is very useful when the data contain only a small number
  of repeated measurements per individual. With longer sequences,  the assumption that individual random
	effect trajectories can be approximated by straight lines becomes implausible. 
	\cite{diggle1988}  proposed adding to the model a {\it time-varying} random-effect term, 
	$W_i(t_{ij})$, specified as a stationary stochastic process. \cite{taylor1994} and \cite{diggle2015} 
	later considered non-stationary 
	options for $W_i(t_{ij})$. The general specification for
	models of this kind is that
  \begin{equation}
  Y_{ij} =  \mv{a}_i^\top \mv{\alpha} + \mv{x}_{ij}^\top \mv{\beta} + \mv{d}_{ij}^\top \mv{U}_i + W_i(t_{ij}) + \sigma Z_{ij}
  \label{eq:linearGaussian}
  \end{equation}
  where, 
	in addition to the notation already introduced, the $W_i(t)$ are 
	independent copies of a continuous-time  Gaussian process with mean zero and 
	covariance function $\gamma(t, t^{\prime})= {\rm Cov}\{W_i(t), W_i(t^{\prime})\}$. 
  We consider the elements of 
	both the $\mv{a}_i$ and the $\mv{x}_{ij}$ to be pre-specified constants. 
	This implicitly assumes, in particular, that if any time-varying covariate 
	is not pre-specified, it is stochastically independent of all other terms 
	in the model, hence conditioning on it is innocuous. We can then drop the 
	term $\mv{a}_i^\top \alpha$ in (\ref{eq:linearGaussian}) by allowing 
	elements of $\mv{x}_{ij}$ to take identical values for all $j$ associated with any fixed $i$.
  For the covariance function $\gamma(t, t^{\prime})$, we use the stationary Mat\'{e}rn (1960) family,
  \begin{equation}
  \gamma(t, t^{\prime}) = \omega^2 \left\{ 2^{\phi-1} \Gamma \left( \phi \right) \right \}^{-1} 
               \left(|t - t^{\prime}| /\kappa \right)^{\phi} K_{\phi} \left(|t - t^{\prime}| /\kappa \right), 
    \label{eq:matern}
   \end{equation}
	where $\Gamma(\phi)$
  is the complete gamma function, $\phi$ is a shape parameter, 
   $\kappa$ is a scale parameter measured in units of time
	and $K_{\phi}$ is a modified Bessel function of order $\phi$. 
	The corresponding Gaussian process $W_i(t)$ is 
	$r$ times mean-square differentiable if $\phi>r$.
	An alternative way of capturing non-linear behaviour of 
	repeated measurements is to specify the random effects as regression splines 
	or polynomials with stochastic coefficients \citep[Chapter 19]{fitzmaurice2011}. 
	We do not consider these approaches in this paper, 
	since they appear to us less natural  than the stochastic process approach based on (\ref{eq:matern}) and would require many more parameters to
      achieve the same flexibility in shape.
 
  Likelihood-based inference  for the model (\ref{eq:linearGaussian}) is straightforward. 
	The likelihood function is a product of $m$ multivariate Normal densities with dimensions $n_i$. 
For typical study designs the $n_i$ are small enough that the required matrix calculations 
	are not computationally demanding. 
  
  In the continuous-time setting, it is helpful to exploit an alternative representation of 
	$W(\cdot)$ as the solution to a stochastic differential equation,
  \begin{equation}\label{eq:GaussianSDE}
  {\cal D}{W(t)} = Z(t),
  \end{equation}
  where ${\cal D}$ is a differential operator and $Z(t)$ is continuous-time 
	white noise \citep{lindgren2011}.
	For example, the integrated random walk model used 
	by \cite{diggle2015} corresponds to ${\cal D}= \frac{\partial^2}{\partial t^2} $, 
	whilst the Mat\'{e}rn model corresponds to 
	$${\cal D}= \left(\frac{\partial^2}{\partial t^2} - \kappa \right)^{(2 \phi+1)/4}.$$
	 
  In applications where only the regression parameters $\beta$ are of scientific interest, 
 estimating equations offer an alternative to likelihood-based estimation. In the current 
 context, this approach was introduced by \cite{liang1986}, working in the wider setting of 
 generalised  linear models. For linear models, the approach consists of estimating $\beta$ 
 by weighted least squares, hence 
 \begin{equation}
 \tilde{\mv{\beta}} = \left(\sum_{i=1}^m \mv{x}_i^\top \mv{F}_i \mv{x}_i \right)^{-1}
                       \left(\sum_{i=1}^m \mv{x}_i^\top \mv{F}_i \mv{Y}_i \right)
 \label{eq:ee}
 \end{equation}
 where, for each $i$, $\mv{Y}_i=(Y_{i1}, \ldots, Y_{in_i})^{\top}$, 
 $\mv{x}_i$ is the $n_i$ by $k$ matrix whose $j$th row is $\mv{x}_{ij}^{\top}$ and 
 the $\mv{F}_i$ are weight matrices.  
 Re-writing (\ref{eq:ee}) in an obvious shorthand notation as 
 $\tilde{\mv{\beta}} = \mv{D} \mv{Y}$, inference for $\mv{\beta}$ uses the result that 
 $\tilde{\beta}$ is asymptotically multivariate Gaussian with mean 
 $\mv{\beta}$ and variance $\mv{D} \mv{C} \mv{D}^\top$, where $\mv{C} = {\rm Var}(\mv{Y})$, 
 a block-diagonal matrix with non-zero blocks, $\mv{C}_i ={\rm Var}(\mv{Y}_i)$. 
 If $\mv{F}_i = \mv{C}_i^{-1}$, then $\tilde{\mv{\beta}}$ is the maximum 
 likelihood (ML) estimator for $\mv{\beta}$.

 The basic idea behind (\ref{eq:ee}) is to choose, rather than estimate, 
 a set of matrices $\mv{F}_i$ that reflect a reasonable {\it working covariance structure} 
 for the matrices $\mv{C}_i = {\rm Var}(\mv{Y}_i)$, but not to rely on the correctness of the chosen structure. 
 Instead, the unknown matrix $\mv{C}_i$ is replaced by a non-parametric estimate, $\tilde{\mv{C}}_i$. 
 One such set of estimates is given by
$\tilde{\mv{C}}_i = n_i^{-1} (\mv{Y}_i - \mv{x}_i \tilde{\mv{\beta}})(\mv{Y}_i - \mv{x}_i \tilde{\mv{\beta}})^\top.$
Individually, each $\tilde{\mv{C}}_i$ is a very poor estimate of $\mv{C}_i$, but the  
implicit averaging in (\ref{eq:ee}) leads to consistent estimation
of ${\rm Var}(\tilde{\mv{\beta}})$ in the limit $m \rightarrow \infty$ for fixed $n_i$.

\subsection{Non-Gaussian models for real-valued repeated measurement data}  
\label{sec:literature_non_Gaussian}

 The existing literature on non-Gaussian 
 models takes as its starting point a 
 linear model with correlated errors,
 \begin{equation}
 \label{eq:lin_mod_cor_error}
 Y_{ij} =  \mv{x}_{ij}^\top \mv{\beta} + S_{ij}
 \end{equation}
 where, in the case of a common set of follow-up times $t_1,...,t_n$ for each subject, the
 $\mv{S}_{i} = (S_{i1}, \ldots, S_{in})^T$ are independent copies of a zero-mean multivariate Normal random
 vector \citep{jennrich1986}.  Most authors  only consider the  Laird-Ware approach as presented in 
 \eqref{eq:laird-ware}, where  
 \begin{equation}
 S_{ij} = \mv{d}_{ij}^\top \mv{U}_i + Z_{ij}.
 \label{eq:singleterm}
 \end{equation}

\cite{liu1995}, \citet{lange1989} and  \citet{pinheiro2001} replaced each  $S_{ij}$
in (\ref{eq:lin_mod_cor_error}) or  \eqref{eq:singleterm} by
 $S^*_{ij} = S_{ij}/\sqrt{V_{i}}$ where the $V_{i}$ are mutually independent  
 unit-mean gamma-distributed random variables. 
 They estimated the model parameters by maximum
 likelihood using an EM algorithm \citep{dempster1977}.  
 \citet{lin2011} considered Bayesian methods of inference for the same class of models. 
 \citet{matos2013} extended the work of \citet{pinheiro2001} to 
 allow censored outcomes. 
 
 \citet{song2007} and \citet{zhang2009} considered an extension to \citet{lange1989} by
allowing the gamma-distributed scaling factor $V_i$ to apply to either one of the two components 
on the right hand side of (\ref{eq:singleterm}).
\citet{lin2007} apply the gamma-distributed scaling factor only to the random effect term $\mv{d}_{ij}^\top \mv{U}_i$,
but also replace the mutually independent $Z_{ij}$ by a set of autoregressive processes; this
restricts its applicability 
 to data with  equally-spaced measurement times.

  \citet{rosa2003} and \citet{tian2008} 
  also used the formulation
  $S^*_{ij} = S_{ij}/\sqrt{V_{i}}$ but  without restricting the
  $V_{i}$ to be gamma-distributed.  \citet{lange1995} called the resulting  family of
 distributions the
  {\it Normal/independent} family.  The
	{\tt R} package {\tt heavy} \citep{osorio2016} fits this class of models.  In a series of papers, V.H. Lachos
    and colleagues have developed methodology for fitting non-linear 
     mixed models using  the Normal/independent family; see 
	\citet{lachos2009}, \citet{lachos2010}, \citet{zeller2010}, \citet{lachos2011}, \citet{cabral2012},
     \citet{lachos2012} and
       \citet{lachos2013}, also independent contributions by \citet{verbeke1996}, 
     \citet{sun2008}, \citet{ho2010},  \citet{delacruz2014}, \citet{zhang2015}
      and \citet{yavuz2016}.   

 Several authors have
 extended the single-term modelling framework (\ref{eq:singleterm})
 by de-coupling the 
 scalings of the random effect term and
 the measurement error term.  See, for example, \citet{rosa2004},  \citet{aralleno_valle2007},
 \citet{jara2008},
 \citet{meza2012},
 \citet{choudhary2014} and \citet{bai2016}. 
  \citet{lu2014} extended the approach to include consideration of non-ignorable drop-out. 
 
 \citet{wang2011},
 \citet{wang2012},
 \citet{lin2013} and
 \citet{kazemi2013} used the Normal/independent  family to model multivariate repeated measurement data. 

 Other authors have taken a semi-parametric approach to the problem, for example by using
a Dirichlet process prior for the random effects. See
\citet{kleinman1998}, \citet{ghidey2004}, \citet{tao2004},
 \citet{subtil2010},
 \citet{davidian1993}, 
 \citet{zhang2001} and  
 \citet{vock2012}. 
 \citet{robustlmm} has considered robust estimating equations. 

We have  found only two papers that apply the single-term formulation (\ref{eq:singleterm})  to the
general form of the model (\ref{eq:linearGaussian}) with three stochastic components (Stirrup et al, 2015; Asar et al, 2016),
\nocite{stirrup2015}  \nocite{asar2016}
and none that allow the
three scaling factors to be de-coupled.
  	
\section{A flexible class of non-Gaussian models} 
\label{sec:model}

Our aim in this section is to set out a version of the mixed effects model
\begin{equation}
Y_{ij} = \mv{x}_{ij}^\top \mv{\beta} + \mv{d}_{ij}^\top \mv{U}_i + W_i(t_{ij}) + \sigma Z_{ij}; \quad j=1,\ldots,n_i, \quad i=1, \ldots,m,
\label{eq:three-component}
\end{equation}
that allows Gaussian or non-Gaussian distributional
 specifications of the three stochastic components $\mv{U}_i$, $W_i(t)$ and $Z_{ij}$ to be de-coupled.

 Writing  $\mv{X}$ and $\mv{Z}$ to 
 denote generic vector-valued random variables, the specification will be based on 
 replacing each of the Gaussian distributions with a Normal variance-mean mixture of the form 
 \begin{equation}
 \mv{X} = \mv{\delta} + \mv{\mu} V + \sqrt{V} \mv{\Sigma}^{1/2} \mv{Z},
 \label{eq:ghd}
 \end{equation}
 where $\mv{\delta}$ and $\mv{\mu}$ are vectors of parameters, 
 $\mv{Z} \sim {\rm N}(\mv{0},\mv{I})$, and $V$ is a random variable that takes values on $\R^+$. 
 We need to impose some restrictions on the distribution of $V$ 
 for the inferential algorithms
 that we develop in Section \ref{sec:inference} to be
 practicable.
  For the subject-specific {\it random-effect}, 
 $\mv{U}_i$, and the measurement-specific {\it noise}, $Z_{ij}$, the only necessary restriction is 
 that $V$ has a known distribution. However, to simplify parameter estimation, we shall impose
 the additional restriction that $V|\mv{Z}$ also has a known distribution. For the subject-specific continuous-time 
 {\it stochastic process}, $W_i(t)$, we  use a numerical discretisation of the differential 
 operator (\ref{eq:GaussianSDE}) to generate realisations of  the process. For this reason,
 we need
the distribution to be closed under arbitrary discretisation, which we ensure by 
 requiring the distribution of $V$ to be closed under convolution. Our specific proposals for 
 $\mv{U}_i$, $W_i(t)$ and $Z_{ij}$ are described in more detail below.
 
 \subsection{Noise}

 A flexible choice for the noise term in (\ref{eq:three-component}) is the multivariate generalised 
 hyperbolic (GH) distribution (Barndorff-Nielsen, 1977; Vilca, Balakrishnan and Zeller, 2014). 
 This distribution can be generated from the mixture representation \eqref{eq:ghd} by specifying a
 generalised inverse Gaussian distribution (GIG) for $V$. The density function of the GIG distribution is
 \begin{equation}
 f(x; p, a, b) = \frac{(a/b)^{p/2}}{2K_p\left(\sqrt{ab}\right)} x^{p-1} \exp\left(- \frac{a}{2} x - \frac{b}{2} x^{-1}\right),
 \label{eq:GIG}
 \end{equation}
 where $K_p$ is the modified Bessel function of the third kind, of order $p$, and $a$ and $b$ are positive-valued parameters. 
 We denote this distribution by ${\rm GIG}(p,a,b)$ and refer 
the reader to \cite{jorgensen2012} for more details. An important property of this 
 distribution is that for any $c > 0$, $c V \sim {\rm GIG}(p,a/c,cb)$. 
 Another property that is useful for the construction of the sampling-based
 inferential algorithms that we introduce in 
 Sections \ref{sec:stochastic_gradient} and \ref{sec:subsampling} 
 is that the conditional distribution of $V$ given the observed data is also GIG.

 The GH distribution includes several widely-used distributions as special cases, for example the Student's $t$, generalized asymmetric Laplace (GAL), 
Normal-inverse Gaussian (NIG) and  Cauchy (CH) distributions.
 Specific parameter configurations for the distributions
 of $V$ 
 that give each of these special cases are presented 
 in Table \ref{tab:GIG}.

 \begin{table}[tb]
 \centering
 \fbox{
 \begin{tabular}{ l l  l }
 Distribution of $\mv{X}$  & mixing distribution of $V$           & GIG form of the mixing distribution \\ \hline
 t$(\mv{\delta}, \mv{\mu},\nu)$ 		 & IGam$(\frac{\nu}{2}, \frac{\nu}{2})$ & GIG$(\frac{\nu}{2}, \nu,0 )$\\
 NIG$(\mv{\delta}, \mv{\mu}, a, b)$  & IG$(a,b) $ 							            & GIG$(-\frac{1}{2}, a, b)$ \\
 GAL$(\mv{\delta}, \mv{\mu}, p, a)$  & Gam$(p,a) $						              & GIG$(p, 2a, 0)$\\
 CH$(\mv{\delta}, \mv{\mu}, b)$ 	   & IGam$(\frac{1}{2}, \frac{b}{2})$ 		& GIG$(-\frac1{2}, 0, b)$
 \end{tabular}} 
 \caption{Some special cases  of the GH, their mixing distributions and their corresponding GIG forms. 
 Gam indicates the Gamma family of distributions, IGamma the inverse Gamma,  IG the inverse Gaussian.}
 \label{tab:GIG}
 \end{table}

 Since the measurement noise is univariate, we can write the mixture representation \eqref{eq:ghd} as 
 \begin{equation}
 \label{eq:univariateGH}
 Z_{ij} = \delta + \mu V_{ij} + \sqrt{V_{ij}} \sigma Z_{ij}^*,
 \end{equation}
 where $Z_{ij}^* \sim {\rm N}(0,1)$. To maintain the interpretation of $\sigma^2$ as the variance of the noise,
 we constrain the values of the GIG parameters $a$, $b$ and $p$, so that ${\rm E}[V_{ij}]=1$ 
 if ${\rm E}[V_{ij}]$ exists, and the mode of the distribution is one otherwise. 
 We further set $\delta = \mu = 0$ to ensure that
the measurement noise is symmetric with ${\rm E}[Z_{ij}] = 0$.

 An alternative to \eqref{eq:univariateGH} is to attach a single random variable $V_i$ to all of the noise terms
$Z_{ij}$ on the $i$th subject, 
 i.e. $Z_{ij} = \delta + \mu V_{i} + \sqrt{V_{i}}\sigma Z_{ij}^*$. 
 The distribution of $V_i$ can then be interpreted
as a random effect distribution for patient-specific measurement noise variance. Note, in particular,
 that this  introduces stochastic dependence
 between $Z_{ij}$ and $Z_{ij^{\prime}}$ 
for $j \neq j^{\prime}$.

 \subsection{Random effects}

 For the random effects, 
we let $\mv{U}_i = \mv{\delta} + \mv{\mu} V_{i} + \sqrt{V_{i}}\Sigma^{1/2} \mv{U}_{i}^*$, 
 where $V_i$ is a unit-mean GIG random variable and $\mv{U}_{i}^* \sim {\rm MVN}(\mv{0}, \mv{I})$ with
 $\mv{I}$ the identity matrix. 
 To allow skewness in the distribution
of the random effects we do not require that $\mv{\mu}=\mv{0}$, but we
then  ensure that ${\rm E}[\mv{U}_i] = \mv{0}$ by 
 setting $\mv{\delta} = -\mv{\mu}$.
 
 \subsection{Stochastic process}

 The simplest  way to introduce  a non-Gaussian  stochastic process term in (\ref{eq:three-component})
 would again be to include a subject-specific scaling, i.e. $W_i(t) = V_i W_i^*(t)$, where 
 $V_i$ again follows a unit-mean GIG distribution. However, 
 this approach would not be able to capture interesting within-subject departures from Gaussian behaviour, 
 e.g. jumps or asymmetries in the sample paths of $W_i(t)$. To provide the required flexibility, we 
 instead use
 non-Gaussian generalisations of the stochastic differential equation \eqref{eq:GaussianSDE}.
 Specifically, we propose modelling the $W_i(t)$ as independent copies of the solution to 
 \begin{equation}\label{eq:nonGaussSDE}
 \mathcal{D} W_i(t) = dL_i(t),
 \end{equation}
 where the $L_i$ are independent copies
 of a non-Gaussian L\'{e}vy process, i.e. a process with independent and stationary
 increments. In practice, we work with a discretised version of (\ref{eq:nonGaussSDE}), 
 for which \cite{bolin2014} showed that a type-G L\'evy process for $L_i(t)$ is a suitable candidate. 
 The implication is that the increments of $L_i$ have a distribution that corresponds to the specification 
 given by \eqref{eq:ghd}.
  
 One approach would therefore be to choose the distribution of $V$ in \eqref{eq:ghd} as a GIG distribution, 
 which would yield the Hyperbolic processes of \cite{eberlein2001application}. However, as noted earlier, we require the distribution of $V$ to be 
 closed under convolution \citep{wallin2015}. Also, the stochastic gradient methods 
 for parameter estimation to be 
introduced in Section \ref{sec:inference} require
 sampling from the conditional distribution of $V$ given all other components in the model. 
 Within the GH family, the NIG, GAL, and Cauchy distributions are the only ones that meet 
 these requirements \citep{podgorski2016convolution}. Using any of these distributions for the increments 
 of $L_i$ in \eqref{eq:nonGaussSDE} results in models with the same covariance structure 
 as if $L_i$ were Gaussian, but with more general marginal distributions. The NIG choice 
 makes $L_i$ a NIG process \citep{barndorffnielsen1997a}, which has been used in financial modeling; see 
\citet{barndorffnielsen1997b}, \citet{Bibby2003}, 
\citet{tankov2003financial} and
\citet{eberlein2001application}. We will focus on 
 the NIG case in order to keep the presentation brief, but the modifications needed to cover the
 GAL and Cauchy counterparts are straightforward.

 For computational purposes, we use a discretised version of the stochastic differential 
 equation (\ref{eq:nonGaussSDE}) as follows;
see also \citet{lindgren2008}. 
Firstly, denote by $\scal{f}{g}$ the standard 
 inner product on $\R$,
 $\scal{f}{g}=\int f(t)g(t)dt$. We restrict $W(t)$ to a finite interval, 
 $0 \leq t \leq t_0$, and impose Neumann boundary conditions, $dW(0) = dW(t_0) = 0$. 
 The so-called {\it weak form} of (\ref{eq:nonGaussSDE}) is a function of $W(t)$ that 
 satisfies the equation,
 \begin{equation}
 \scal{\psi}{\mathcal{D}W} = \scal{\psi}{dL},
 \label{eq:weak}
 \end{equation}
 for a specified set of {\it test functions} $\psi(t)$.
 We now use the following {\it low-rank} approximation,
 \begin{equation}\label{eq:basisexp}
 W(t) = \sum_{k=1}^{K} \phi_k(t)  W_k,
 \end{equation}
 where $\mv{W} = (W_1, \ldots, W_{K})$ is a vector of random variables 
and the $\phi_k(t)$ are basis functions. We use a set of piecewise linear basis functions such that 
 $$
 \phi_1(t) = \begin{cases}
 1-\frac{t-s_1}{s_{2}-s_1}, & s_{1} < t < s_{2}, \\
 0, & \mbox{otherwise},
 \end{cases}
 $$
 $$ 
 \phi_K(t) = \begin{cases}
 \frac{t-s_{k-1}}{s_k-s_{k-1}}, & s_{k-1} < t < s_k, \\
 0, & \mbox{otherwise},
 \end{cases}
 $$
 and, for $k=2, 3, ... , (K-1)$,
 $$
 \phi_k(t) = \begin{cases}
 \frac{t-s_{k-1}}{s_k-s_{k-1}}, & s_{k-1} < t < s_k, \\
 1-\frac{t-s_k}{s_{k+1}-s_k}, & s_{k} < t < s_{k+1}, \\
 0, & \mbox{otherwise},
 \end{cases}
 $$
 where $0 = s_1 < s_2 < \ldots < s_{K-1} < s_{K} = t_0$.  

 We use the Galerkin finite element method to compute the stochastic weights, $W_k$. 
 This consists of setting all the test functions 
 to the basis functions, i.e. $\psi_k = \phi_k$ for all $k$, 
 and computing the $W_k$ by solving 
 the system of equations defined by equation \eqref{eq:weak}, i.e. 
 $$
 \mv{K}\mv{W} = \mv{L},
 $$
 where $L_k = \scal{\phi_k}{dL}$, and $K_{kk^\prime} = \scal{\phi_k}{\mathcal{D}\phi_{k^\prime}}$ is a discretised version
 of the differential operator $\mathcal{D}$. For the NIG version of the model, we approximate the distribution of $L_k$ by 
 $$
 L_k = h_k \delta  + \mu V_k + \sqrt{V_k} Z_k,
 $$
 where $Z_k \sim {\rm N}(0, 1)$, $h_k= \scal{\phi_k}{1}$, and $V_k \sim {\rm IG}(\nu, h_k^2 \nu)$ \citep{bolin2014}. 
 It follows that the distribution for the stochastic weight-vector $\mv{W}$ conditional on $V$ can be written as 
 $$
 \mv{W}|\mv{V} \sim {\rm N}(\mv{K}^{-1} \left( \mv{h}^{\top} \delta + \mv{V}^{\top} \mu\right) , \mv{K}^{-1}\diag(\mv{V})(\mv{K}^{-1})^{\top}).
 $$
 Since the parameter $\nu$ determines the value of
 $b$ in the NIG distribution, it controls the tails of the marginal distribution of the process. The limit when $\nu \rightarrow 0$ is the Cauchy process, whereas the limiting case $\nu \rightarrow \infty$ is a Gaussian process. These are exactly the properties we need in order to use our
 likelihood-based methods to  assess whether a standard, and undeniably convenient, 
 Gaussian assumption for any or all of the stochastic components of (\ref{eq:three-component}) is adequate.  

\subsection{Similarity between densities}

 The full GH family of distributions is
difficult to fit to data, 
 because the full log likelihood surface is largely flat, which 
 makes the model parameters almost non-identifiable.
 The problem persists for some of the sub-families discussed above.
 For example, a NIG distribution converges to a Cauchy distribution as $a\rightarrow 0$ and
 to a Gaussian distribution as $a\rightarrow \infty$ and  $b \rightarrow \infty$ at the same rate.
 Recognising these limiting cases is important in practice, 
 since typically the densities are numerically unstable at the edges of 
 the parameter space. 
 Moreover,  handling the NIG distribution is computationally more demanding than the Gaussian distribution.
 Rules for switching between the distributions, or 
 equivalently setting finite boundaries to the parameter space, require some guidance.
 This can be achieved by the total variation (TV) distance between pairs of densities. 
 For illustration, we consider the TV distance between two symmetric, 
 zero-mean distributions. For instance, to compare the NIG distribution 
 for fixed $a$ with
 the Cauchy distribution, we calculate
 $$TV_{NIG,CH}(b_{CH}, a,b_{NIG} ) = \min_{b_{CH}} \int |{\rm CH}(x; 0,0,b_{CH}) - {\rm NIG}(x;0,0,a,b_{NIG}) | dx.$$
 To simplify the calculations needed to find the Cauchy 
 distribution ${\rm CH}(0,0,b_{CH})$ that 
is closest to the ${\rm NIG}(0,0,a,b_{NIG})$ distribution 
 we use the proposition below,
 which shows that it suffices first
to find the Cauchy distribution
 closest to ${\rm NIG}(0,0,a,1)$,
then rescale the shape parameter by $b_{NIG}$.
 \begin{prop*}
 Let $f_{s}(x)$ and $g_{h}(X)$ be two distributions with respect to the Lebesgue measure, with 
 scaling parameters $s$ and $h$. Then,
 $TV(f_s,g_h) = TV(f_{s/c},g_{h/c} )$ for $c>0$.
 \end{prop*}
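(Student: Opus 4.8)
The plan is to reduce the identity to a single change of variables in the defining integral, so the first step is to make precise what a scaling parameter means. For a density belonging to a scale family, $f_s(x) = s^{-1} f_1(x/s)$; equivalently, if a random variable $X$ has density $f_1$ then $sX$ has density $f_s$, and the analogous relation holds for $g_h$. This is the only structural property I will use, and the whole argument hinges on tracking how $s$ enters both the argument of $f_1$ and the normalising Jacobian $s^{-1}$.

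Taking $TV(f,g) = \int |f(x) - g(x)|\,dx$, I would write $TV(f_{s/c}, g_{h/c}) = \int |f_{s/c}(x) - g_{h/c}(x)|\,dx$ and substitute the scale-family forms $f_{s/c}(x) = (c/s)\,f_1(cx/s)$ and $g_{h/c}(x) = (c/h)\,g_1(cx/h)$. Since the factor $c$ multiplies both terms, it can be pulled outside the absolute value. Applying the substitution $u = cx$, with $dx = c^{-1}\,du$, the outer factor $c$ coming from the scale-family normalisation cancels exactly against the Jacobian factor $c^{-1}$, and the integrand collapses to $|s^{-1} f_1(u/s) - h^{-1} g_1(u/h)| = |f_s(u) - g_h(u)|$. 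The integral is then precisely $TV(f_s, g_h)$, which is the claimed equality. Because $c > 0$, the map $x \mapsto cx$ is an orientation-preserving bijection of $\R$, so no care with the limits of integration is needed.

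Conceptually this is nothing more than the invariance of total variation under applying the common bijection $x \mapsto x/c$ to both distributions, and one could phrase the entire proof in those terms instead. The only point requiring genuine care, and the step I would flag as the main obstacle, is the bookkeeping of the two competing powers of $c$: one from the Jacobian of the change of variables and one from the scale-family normalising constant. Their exact cancellation is what makes the statement hold, and since it occurs regardless of the specific forms of $f_1$ and $g_1$, the result applies uniformly to the NIG, Cauchy and Gaussian comparisons that motivate it.
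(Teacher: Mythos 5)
Your proof is correct and follows essentially the same route as the paper's: both arguments rest on the scale-family identity (the paper writes it directly as $f_{s/c}(x) = c\,f_s(cx)$, you derive it via the base density $f_1$) followed by the substitution $x \mapsto x/c$, with the Jacobian cancelling the normalising factor. The only difference is cosmetic bookkeeping through $f_1$ and $g_1$, plus the harmless factor-of-$\tfrac{1}{2}$ convention in the definition of $TV$.
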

 \begin{proof}
 First note that
 $$
 TV(f_{s/c},g_{h/c} ) = \frac{1}{2} \int |f_{s/c}(x) - g_{h/c}(x)| dx = \frac{c}{2} \int |f_{s}(cx) - g_{h}(cx)| dx.
 $$
 Now use integration by substitution with respect to
$\phi(x) = \frac{x}{c}$ to give
 $$
 \frac{c}{2} \int |f_{s}(cx) - g_{h}(cx)| dx = \frac{1}{2} \int |f_{s}(x) - g_{h}(x)| dx = TV(f_s,g_h).
 $$
 \end{proof} 
 
The proposition can also be used to compare NIG and Gaussian distributions. 
Figure \ref{fig:TV} shows the TV distances between the NIG and 
 Cauchy, and between the NIG and Gaussian, as functions of $a$. 
 For $a = 0.001$, the TV distance between the NIG and Cauchy is less than that between 
 two Bernoulli distributions whose probabilities differ by $0.002$. The same
 applies to the TV distance between the NIG and Normal when $a=250$. 
 This suggests that setting the boundary of $a$ at these values, i.e., switching from the NIG to the
 Gaussian when $a>250$ and to the Cauchy when $a<0.001$, is a conservative strategy for
parameter estimation, or for prediction within or close to the observed range of the  data. 
Since the differences between the distributions are in their tails, prediction of extreme events could be affected by the switching even when one cannot tell empirically which density should be used. For example, the NIG distribution has exponential  tails for all values $a$, whereas the Cauchy has polynomial tails.
 \begin{figure}[t]
 \centering
 \includegraphics[width=0.4\linewidth]{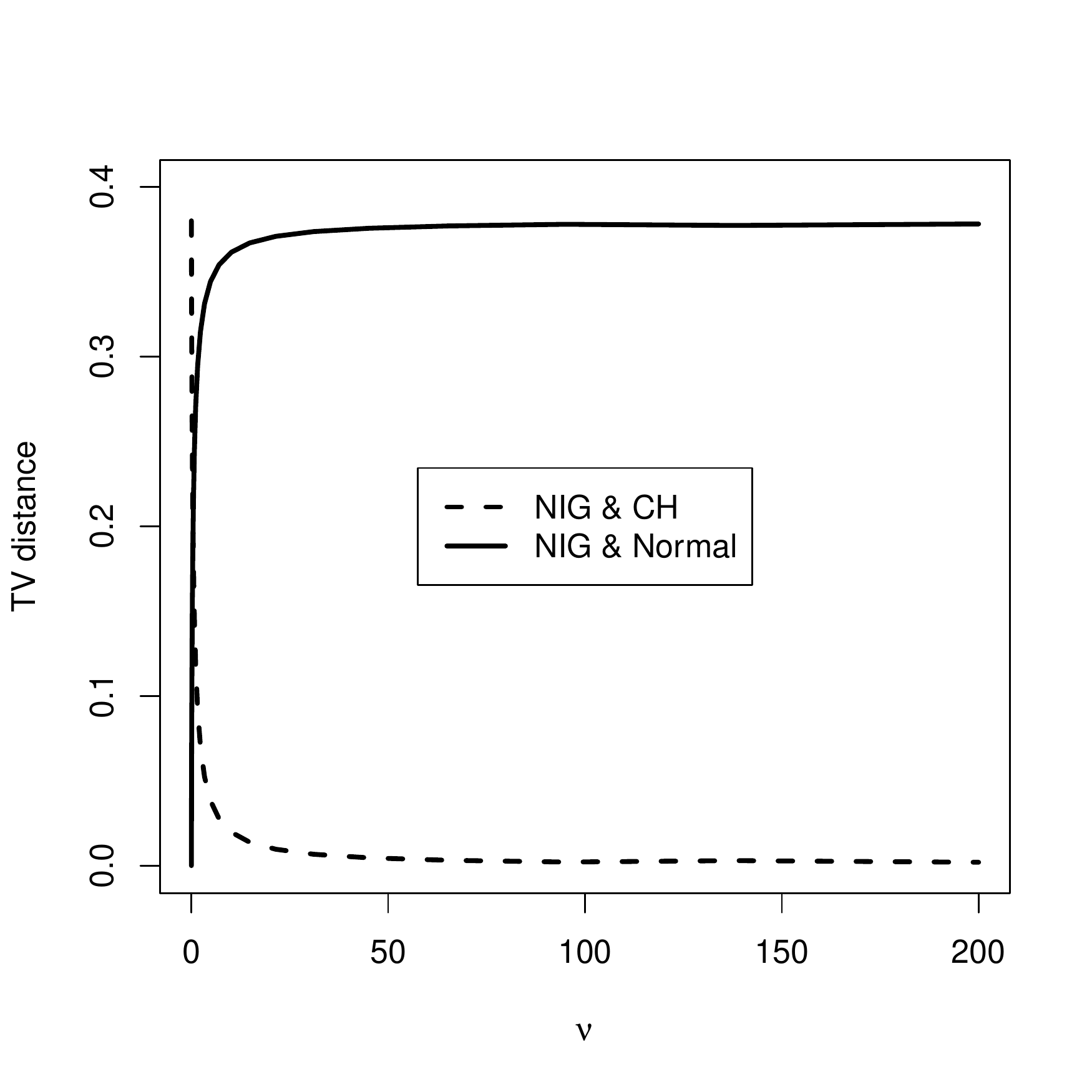}
 \caption{The dahsed line is the TV distance between NIG and Cauchy for varying $\nu$,  
          the solid between NIG and Normal.}
 \label{fig:TV}
 \end{figure}

 \section{Likelihood-based inference} 
 \label{sec:inference}

 \subsection{Hierarchical representation}
 \label{sec:HierRep}

 Our specification of a Normal-variance mixture  for
 each of the stochastic components of (\ref{eq:three-component}) 
 makes likelihood-based inference practicable via the following hierarchical
 representation of the model. For subject $i$, let $\mv{V}_i^Z$, $V_i^U$, and $\mv{V}_i^W$ 
 denote the stochastic variance factors corresponding to the noise, random effect and stochastic process 
 components of (\ref{eq:three-component}), and write  $\mv{Y}_i = (Y_{i1}, \ldots, Y_{in_i})^{\top}$ for the
 corresponding set of repeated measurements. Let 
 $W_i =\{ W_{ik}: k = 1, \ldots, K\}$ 
 be the stochastic weight vector for the $i$th subject in
 the approximation of $W_i(t)$ given by \eqref{eq:basisexp}, 
 and $\mv{A}_i$ the $n_i$ by $K$ matrix 
 with $(j,k)$th element $\phi_k(t_{ij})$. Write 
 $\mv{x}_i$  and $\mv{d}_i$  for  the matrices with $j$th row
 $\mv{x}_{ij}^\top$ and $\mv{d}_{ij}^\top$, respectively. Finally, let
 $\mv{\Theta}$ denote the complete set of model parameters. 
 The model for the $i$th subject then has the following hierarchical representation
 \begin{align*}
 \mv{Y}_i|\mv{W}_i, \mv{U}_i,\mv{V}_i^Z, \mv{\Theta} &\sim {\rm N}(\mv{x}_i\mv{\beta} + 
 \mv{d}_i \mv{U}_i + \mv{A}_i\mv{W}_i, \sigma^2 \diag(\mv{V}_i^Z)),\\
 \mv{U}_i | V_i^U,\mv{\Theta} &\sim {\rm N}(-\mv{\mu}^U + \mv{\mu}^U\cdot V_i^U , V_i^U \mv{\Sigma})  \\ 
 \mv{W}_i|\mv{V}_i^W,\mv{\Theta} &\sim {\rm N}(\mv{K}^{-1} \left(- \mv{h} \cdot \mu^W + \mu^W\cdot \mv{V}_i^W \right),\mv{K}^{-1}\diag(\mv{V}_i^W)(\mv{K}^{-1})^{\top}).
 \end{align*}
 These collectively determine the contribution of the $i$th subject to the log-likelihood, $L(\mv{\Theta}; \mv{y}_{i})$. 
 As the vectors $\mv{Y}_i$ from the $m$ subjects are independent, the overall log-likelihood is
 $$
 L(\mv{\Theta}; \mv{y}) = \sum_{i=1}^m L(\mv{\Theta}; \mv{y}_{i}).
 $$

 \subsection{Stochastic gradient estimation}
 \label{sec:stochastic_gradient}
 
 The computations required for  maximum likelihood estimation are 
 cumbersome for problems that involve
 longitudinal data-sets with 
 large numbers of subjects and repeats, even using
  the computationally efficient approximation (\ref{eq:basisexp}). 
 Our proposed algorithm for ML estimation therefore uses 
 a stochastic gradient method that calculates the gradient 
 of the objective function at each step of the maximisation 
 by sub-sampling.

 A stochastic gradient method for the general problem of 
 minimising an objective function $f(\mv{\Theta})$
 starts with an initial guess $\mv{\Theta}^{(0)}$, and then 
 iteratively updates $\mv{\Theta}$ according to
 \begin{equation}
 \mv{\Theta}^{(n + 1)} = \mv{\Theta}^{(n)} + \alpha_n Q_n(\mv{\Theta}^{(n)}),
 \label{eq:sg}
 \end{equation}
 where $Q_n(\mv{\Theta})$ is a random variable such that 
 ${\rm E}[Q_n(\mv{\Theta})] = \nabla f(\mv{\Theta})$ and $\alpha_i$ 
 is a sequence of positive numbers such that 
 $\sum_{n = 1}^{\infty} \alpha_n = \infty$ and 
 $\sum_{n = 1}^{\infty} \alpha_n^2 < \infty$. 
 Under mild regularity conditions, the resulting
 sequence $\mv{\Theta}^{(n)}$ converges to a stationary point
 of $f(\mv{\Theta})$ \citep{Kushener2003, andrieu2007stability}.
 
 For maximum likelihood estimation, $f(\mv{\Theta}) = -L(\mv{\Theta}; \mv{y})$.
 If the data-set contains a large number of subjects we
 use only 
 a small, randomly sampled subset  in each iteration to generate an 
 efficient stochastic gradient method. For this purpose, 
 $\nabla L(\mv{\Theta}; \mv{y})$ can be replaced by the random variable
 \begin{align}
 Q_n(\mv{\Theta}) = \nabla L_s(\mv{\Theta}; \mv{y}) =  s \sum_{i = 1}^m \nabla L(\mv{\Theta}; \mv{y}_{i}) J_i,
 \label{eq:subsample}
 \end{align}
 where the $J_i$ are independent Bernoulli random variables with ${\rm P}(J_i=1) = 1/s$. 
 Since ${\rm E}[\nabla L_s(\mv{\Theta}; \mv{y})] = \nabla L(\mv{\Theta}; \mv{y})$ for any $s$,
 the resulting stochastic gradient method (\ref{eq:sg}) will converge to a stationary point of the log-likelihood. 
  Our experience, for example with the two case-studies that we describe in Section \ref{sec:examples},
 has been that for data-sets containing a large number of subjects, often we need TO access only a small proportion of the available 
 measurement sequences $\mv{Y}_i$
  at each iteration in order to estimate the parameters reliably. Our proposed algorithm
 therefore becomes
 $$
 \mv{\Theta}^{(n+1)} = \mv{\Theta}^{(n)} + (\alpha_n/m) \nabla_{\mv{\Theta}}L_s(\mv{\Theta}; \mv{y}),
 $$
 where the multiplier $\alpha_n/m$ highlights that for numerical stability the step length 
 should scale with the number of subjects in the complete data-set. 

 For our non-Gaussian models, an additional complication
 is that the likelihood is not available in an explicit form. 
 However, using Fisher's identity \citep{dempster1977} we can 
 compute the gradient of the log-likelihood without 
 computing the log-likelihood itself. 
 For all versions of our model, the log-likelihood conditional
 on the variance components, 
 $\mv{V} =  \{\mv{V}_{i}^W, \mv{V}_i^U, \mv{V}_i^Z\}_{i=1}^m$,
 is Gaussian and thus explicit. Fisher's identity then gives
 $$
 \nabla L_s(\mv{\Theta}; \mv{y}) = {\rm E}_{\mv{V}}(\nabla L_s(\mv{\Theta}; \mv{y},\mv{V})|\mv{y},\mv{\Theta}),
 $$
 where $L_s(\mv{\Theta}; \mv{y})$ is the augmented likelihood and is explicitly available, 
 since $\mv{Y}, \mv{V}| \mv{\Theta}$ is Gaussian and $V|\mv{\Theta}$ is GIG. 

 The expectation is not, in general, explicit but can be 
 approximated by Monte Carlo sampling from
 the conditional distribution $\mv{V}|\mv{y};\mv{\Theta}$. 
 We use a Gibbs sampler and iterate between sampling from the
conditional distributions
 $\mv{V}|\mv{X},\mv{Y}; \mv{\Theta}$ and
 $\mv{X}|\mv{V}, \mv{Y}; \mv{\Theta}$, where
 $\mv{X}$ denotes 
 all the conditional Gaussian components, i.e. 
 $\mv{X} = \{\mv{U}_{i}, \mv{W}_i\}_{i=1}^m$. 
 Convergence of algorithms of this kind is studied in \cite{andrieu2007stability}.

 When using stochastic gradient optimization to maximise over many parameters, 
 it is important to scale the gradient by a pre-conditioner to give 
 a Newton-like iteration:
 \begin{align}
 \label{eq:grads}
 \mv{\Theta}^{(n + 1)} = \mv{\Theta}^{(n)} + \alpha_n \mv{I}^{-1} Q_n(\mv{\Theta}^{(n)}).
 \end{align}
 One option for the pre-conditioner is 
 \begin{equation}
 \mv{I}^*(\mv{\Theta}) = -{\rm E}_{\mv{V}}(\nabla \nabla L_s(\mv{\Theta}; \mv{y},\mv{V})|\mv{y},\mv{\Theta}).
 \label{eq:conditioner}
 \end{equation}
 Calculation of $\mv{I}^* (\mv{\Theta})$ is typically easy,  
 since $\nabla \nabla L_s(\mv{\Theta}; \mv{y},\mv{y})$ is often explicit 
 and can be calculated at the same time as the gradient. 
  \cite{lange1995} describe the connection between (\ref{eq:conditioner}) and the EM algorithm. 
 However, if the same variables 
 are used for the Monte Carlo estimates of the expectations in
 $\mv{I}_{.}(\mv{\Theta})$ and $Q_i(\mv{\Theta})$, the joint updating 
 step \eqref{eq:grads} will be biased due to correlation between 
 the two estimated quantities. 
 A  pre-conditioner
 that is less  problematic numerically, and is
 often explicitly available,
 is the {\it complete} Fisher Information (cFIM),
 \begin{equation}
 \mv{I}_{cFIM}(\mv{\Theta}) = -{\rm E}_{\mv{V},\mv{Y}}(\nabla\nabla L_s(\mv{\Theta}; \mv{y},\mv{V})|\mv{\Theta}).
 \label{eq:complete_Fisher}
 \end{equation}
 Note that in (\ref{eq:complete_Fisher}),
 the expectation is taken over both $\mv{Y}$ and $\mv{V}$. 
 The {\it standard} Fisher information matrix,
 $$
 \mv{I}_{FIM}(\mv{\Theta}) = -{\rm E}_{\mv{Y}}(\nabla\nabla L_s(\mv{\Theta}; \mv{Y})|\mv{\Theta}),
 $$
 is seldom explicit and thus cannot be used as a pre-conditioner. 
 However, we do need to estimate either the standard or 
 the {\it observed} Fisher information matrix,
 $$
 \mv{I}_{oFIM}(\mv{\Theta}) = -\nabla \nabla L_s(\mv{\Theta}; \mv{Y}),
 $$
 in order to calculate confidence intervals
 for the estimated parameters. 
 We estimate $\mv{I}_{oFIM}(\mv{\Theta})$ using Louis's identity \citep{Louis1982},
 \begin{equation}
 \mv{I}_{oFIM}(\mv{\Theta}) = - {\rm E}_{\mv{V}}(\nabla \nabla L_s(\mv{\Theta}; \mv{y},\mv{V})|\mv{y},\mv{\Theta}) - \pV_{\mv{V}}\left[\nabla L_s(\mv{\Theta}; \mv{y},\mv{V}) \nabla L_s(\mv{\Theta}; \mv{y},\mv{V})^T|\mv{y},\mv{\Theta}    \right]. 
 \label{eq:louis}
 \end{equation}
 Both terms on the right-hand side of
(\ref{eq:louis}) can be estimated by Monte Carlo sampling, 
 as proposed for $\nabla L_s(\mv{\Theta}; \mv{y})$ in
 \eqref{eq:subsample}.

 We could estimate $\mv{I}_{FIM}(\mv{\Theta})$ 
 by an additional sampling step, using the fact that 
 $\mv{I}_{FIM}(\mv{\Theta}) = {\rm E}_{\mv{Y}}[\mv{I}_{oFIM}(\mv{\Theta})|\mv{\Theta}]$. 

 \subsection{Sub-sampling with fixed effects: the grouped sub-sampler}
 \label{sec:subsampling}


 An issue with the sub-sampling method described 
 in \eqref{sec:stochastic_gradient} is that 
 the sub-sampled matrices of covariates, $\mv{x}_i$,
 may not be of full rank. If this is the case, none of the pre-conditioners above can be used.
  On the other hand, regular sub-sampling without any pre-conditioners 
 may result in large Monte Carlo variation in the estimated gradient. 
 The cystic fibrosis case-study that we shall describe in Section \ref{sec:CF}
 provides an example, where the data are stratified into birth cohorts whose effects are
important, but one of the cohorts contains only
 seven patients. This issue is related to sub-sampling for S-estimation algorithms in linear regression models
\citep{koller16}. Nonetheless, we could not find a satisfactory solution 
 in the literature that could be applied in the current context. 
 To address the issue, we therefore introduce the following sub-sampling procedure, 
 which we  call the {\it grouped sub-sampler}. The procedure 
 first builds $k+1$ groups of subjects, 
 $\mathcal{G}_0, \mathcal{G}_1, \ldots, \mathcal{G}_k$, 
 in such
 a way that the matrices $\sum_{i \in \mathcal{G}_k} \mv{x}_i$ 
 have full column ranks for $k\geq 1$. The procedure for forming the 
 groups is described in the pseudo-code in Algorithm \ref{alg:groups}.  
 Let $m_g$ be the number of subjects in group $g$, 
 and write $\bar{m} = k^{-1} \sum_{g=1}^k m_g$.
 A sub-sampling step  selects approximately  $M$ subjects by first 
 selecting all subjects in $r < k$  groups chosen at random from 
 $\mathcal{G}_1, \ldots, \mathcal{G}_k$, then 
 adding $M - \bar{m} \times m_g$ subjects chosen at random 
 from $\mathcal{G}_0$. The expected number of sub-sampled 
 subjects is then $M$, and the  matrix of covariates for the sub-sampled 
 subjects has full column rank. To obtain an unbiased estimate of the 
 gradient, we then assign weights $k/m_g$ to the subjects sampled 
 from the groups $\mathcal{G}_1, \ldots, \mathcal{G}_k$, and 
 weight $m_0/(M - \bar{m} \times m_g)$ to those sampled 
 from $\mathcal{G}_0$. 

 \begin{algorithm}[t]
	\caption{Group formation for the grouped sub-sampler. }
	\label{alg:groups}
	\begin{algorithmic}[1]
		\Procedure{group-formation}{$\mv{x}_1,\ldots, \mv{x}_M$}
		\State $\mathcal{I} \gets \{1,\ldots, M\}$
		\State $k \gets 1$
		\While{$|\mathcal{I}|>0$}
		\State $\mathcal{G}_k \gets$ \Call{create-group}{$\mathcal{I}$,$\mv{x}_1,\ldots, \mv{x}_M$}
		\If{$rank(\sum_{i\in \mathcal{G}_{k}}\mv{x}_{i} \mv{x}^T_{i}) = columns(\mv{x}_1)$} \Comment{$columns(\mv{x}_1)=\#$  covariates}
			\State $\mathcal{I} \gets \mathcal{I}\setminus \mathcal{G}_k$
			\State $k \gets k+1$		\Else
		\State $\mathcal{G}_0 \gets \mathcal{I}$
		\State $\mathcal{I} \gets \emptyset$
		\EndIf
		\EndWhile
		\State \textbf{return} $\mathcal{G}_0,\ldots, \mathcal{G}_k$
		\EndProcedure
		\Procedure{create-group}{$\mathcal{I}$,$\mv{x}_1,\ldots, \mv{x}_M$}
		\State $\mathcal{G} \gets \mathcal{I}_1$ 
		\State $\mathcal{I} \gets \mathcal{I}\setminus \mathcal{I}_1$
		\While{$rank(\sum_{i\in \mathcal{G}}\mv{x}_{i} \mv{x}^T_{i}) < columns(\mv{x}_1)$ and  $|\mathcal{I}|>0$}
		\If{$rank(\mv{x}_{\mathcal{I}_1} \mv{x}^T_{\mathcal{I}_1}+\sum_{i\in \mathcal{G} }\mv{x}_{i} \mv{x}^T_{i})>rank(\sum_{i\in \mathcal{G}}\mv{x}_{i} \mv{x}^T_{i})$}
		\State $\mathcal{G} \gets \mathcal{G}\cup\mathcal{I}_1$
		\EndIf
		\State $\mathcal{I} \gets \mathcal{I}\setminus \mathcal{I}_1$
		\EndWhile
		\State \textbf{return} $\mathcal{G}$
		\EndProcedure
	\end{algorithmic}
\end{algorithm}

 \section{Case-studies} 
 \label{sec:examples}

 \subsection{Natural progression of lung function in cystic fibrosis patients} \label{sec:CF}
 
 Our first application uses data on the lung function of cystic fibrosis  
 patients, taken from the Danish Cystic Fibrosis register. The patients 
 are all aged over 5 years, and entered the database between 1969 and 2010. 
 The outcome variable is \%FEV1 (per cent predicted forced expiratory
 volume in 1 second), a measure of lung function that is widely used as a 
 descriptor of disease severity \citep{davies2009}. The data, previously 
 analysed by \cite{taylor_robinson2012}, contain 70,448 measurements of \%FEV1
 on 479 patients with follow-up times approximately one month apart. For the 
 analysis reported here, three patients who provided only one \%FEV1 measurement 
 have been excluded. Hence, 476 patients are available for the current analysis. 
 Available covariates are: sex, age, birth cohort (decadal), 
 presence/absence of pancreatic sufficiency, 
 presence/absence of diabetes mellitus, 
 and years after pseudomonas infection. 
 The number of repeated measures per patient ranges between 2 and 597 with a median of 101.5. 
 Total follow-up times ranged between 0.1 and 31.5 years with a median of 10.5. 
 Of the 476 patients, 233 (48.9\%) are female, 20 (4.2\%) have pancreatic sufficiency, 
 14 (2.9\%) have diabetes. Baseline ages range between 5.0 and 48.1 years with a median of 7.0. 
 Cohort numbers are 7 (1.5\%), 42 (8.8\%), 109 (22.9\%),  105 (22.1\%), 141 (29.6\%) and 
 72 (15.1\%) for birth cohorts of 1948--1957, 1958--1967, $\ldots$, 1998--2007, respectively.   
 Baseline \%FEV1 values range between 10.4 and 140.3 with a mean of 78.5.  
 Figure \ref{fig:CF} shows  traces for six patients, chosen to illustrate
 a range of total follow-up times and  patterns of the outcome variable,
 \%FEV1. 
	
 \begin{figure}[t]
 \centering
 \includegraphics[width = 0.6\linewidth]{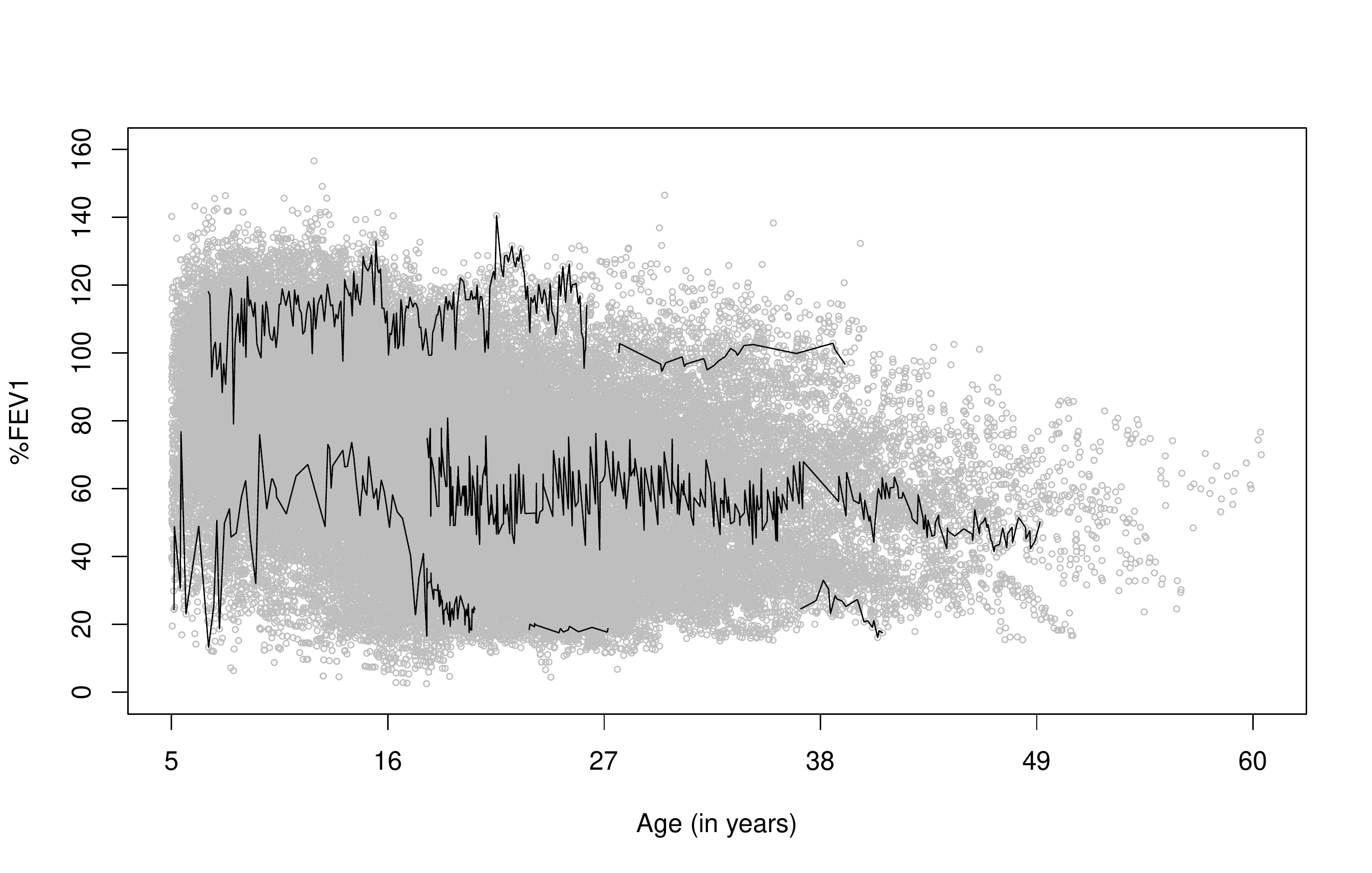}
 \caption{\%FEV1 measurements against age (in years) in background as a grey scatter-plot. 
 Data on six patients are highlighted as black lines.}
 \label{fig:CF}
 \end{figure}
  
 Fitting a model to these data serves two purposes. The first is to characterise the mean
 response profile of  lung function in cystic fibrosis patients, adjusted for relevant covariates. The 
 second is to quantify the extent to which a subject's early results are predictive
 of their long-term prognosis. 

 We let $Y = $ \%FEV1 and specify mixed effects models that fall within the 
 general framework of (\ref{eq:three-component}). Specifically we consider   
 \begin{equation}
 \label{eq:thorax_model}
 Y_{ij}= \mv{x}_{ij}^{\top} \mv{\beta} + U_i + W_i(t_{ij}) + Z_{ij},
 \end{equation}
 where 
the
 $W_i(t)$'s are mutually independent, continuous-time stochastic processes. We model this process
 as the solution to the stochastic differential equation 
 $\left(\kappa^2 - \frac{d^2}{dt^2}\right)^{1/2} W_i(t) = dL_i(t)$, which implies that
  $W_i(t)$ has an exponential covariance function, 
	as in  \citet{taylor_robinson2012}. 

 In this example cohort effects are substantial, reflecting general improvements in the treatments
 available to CF patients over the time-period concerned. This, coupled with the small 
 numbers of patients in some cohorts (e.g. 7 patients in 1948--1957), 
 explains why the grouped sub-sampler 
 described in Section \ref{sec:subsampling} is needed.

 To illustrate the effect of the sub-sampling, we first fit a Gaussian model, i.e  assuming Gaussian distributions for 
$U_i$, $W_i(t)$ and $Z_{ij}$, with  
 and without sub-sampling. In the former case,
we sub-sample 20\% of the patients, i.e. 96 out of 476. 
 The resulting parameter tracks of the optimiser can be seen in Figure \ref{fig:paths}.  
 In this example, there are $k=7$ sub-sampling groups, with an average group size of 
 $8$ subjects, and two groups were sampled at each iteration. The running time for 
 the $20,000$ iterations scales linearly with $M$, the number of patients 
 sub-sampled at each iteration. In this example,
sub-sampling reduced computing time by a factor of
almost five. The variances of the 
 sub-sampled estimates are relatively higher, 
 but the final parameter estimates are almost identical.

 \begin{figure}[t]
 \centering
 \includegraphics[width=0.9\linewidth]{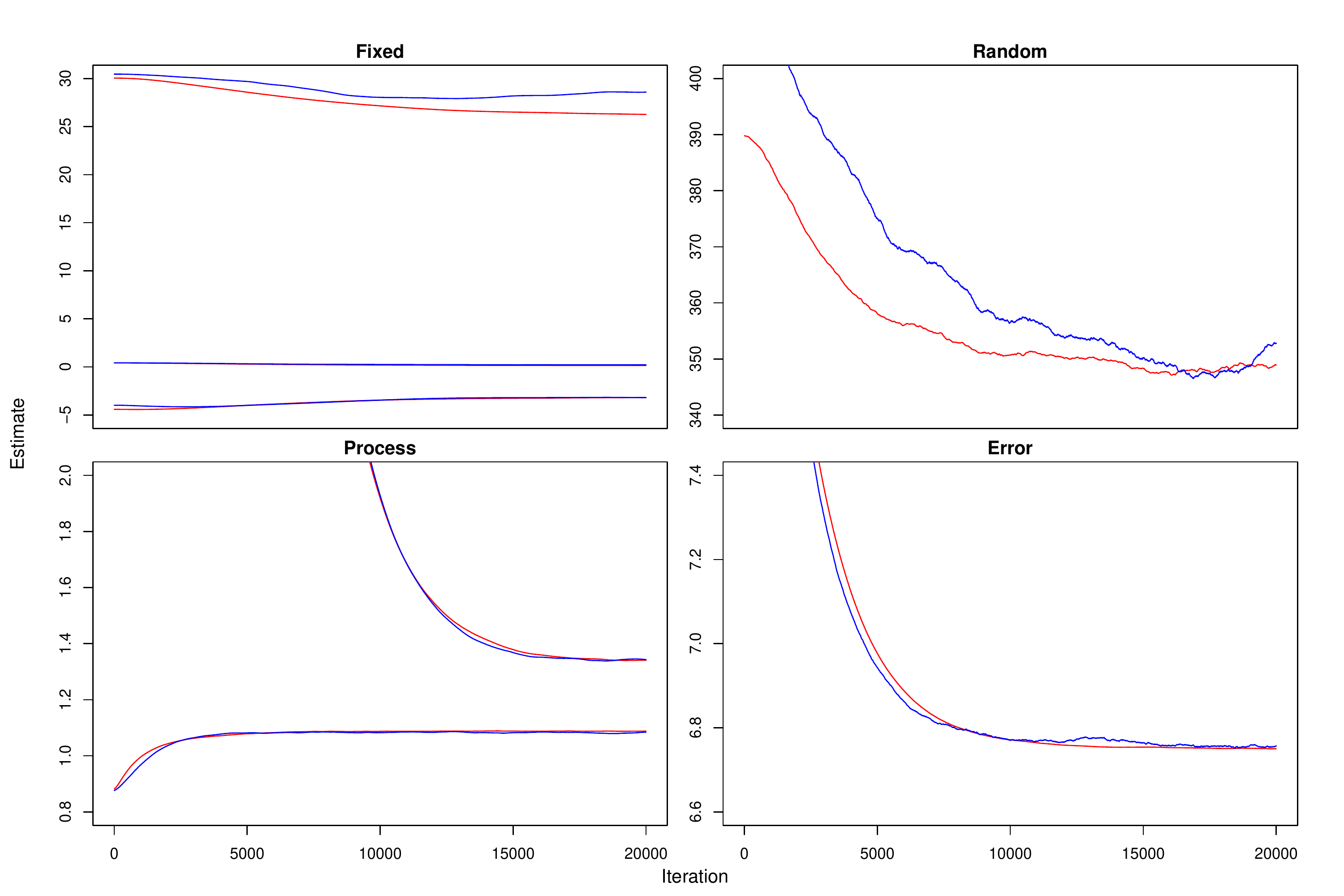}
 \caption{Stochastic gradient estimation paths. 
Red paths show results without sub-sampling, blue paths show results obtained by sub-sampling 20\% of the patients in each iteration.
Upper left: three of the 
 fixed effects parameters; upper right: random effects variance parameter; 
 bottom left: process parameters; bottom right: measurement error variance parameter.
Note that the lower pair of red and blue paths
belong to the parameter $\kappa$ and 
are magnified by a factor of 10 for 
clearer visualisation.}
 \label{fig:paths}
 \end{figure}

 To assess the
suitability of the Gaussian distributional assumption 
we inspected
quantile-quantile (QQ) plots 
 of the standardised marginal residuals, 
  $Y_{ij} - \mv{x}_{ij}^{\top} \mv{\beta}$ .  
 The plot (not shown here) suggests some departure
 from the Gaussian, but 
 as each marginal residual is composed of
 $U_i$, $W_i(\cdot)$ and $Z_{ij}$, the QQ plot is
  not able to detect the source of the departure. 
 We therefore fit the 
 model with a NIG assumption for 
  each of the $U_i$, $W_i(\cdot)$ and $Z_{ij}$ 
 components. The respective estimates of $\nu$ are 83.77, 
 0.34 and 0.48, which indicate  the extent to which each component
 appears to exhibit non-Gaussian behaviour.  
 The estimates of the fixed effect parameters, $\mv{\beta}$, 
 for Normal and NIG models are shown in Table \ref{tab:FEVpars}. 
 Standard error estimates, obtained using standard Fisher matrix, are generally lower under NIG than
 under Gaussian distributional assumptions. 
 With regard to the statistical significance or otherwise of the estimates, 
 we report two p-values, p-lower and p-upper, by taking into account  
 Monte Carlo error in the parameter and standard error estimates. 
 Lower and upper p-values indicate the same judgement on significance, 
 except for pancreatic sufficiency under NIG. Note that 
 this variable is highly unbalanced, 
 with only 20 positives out of 476. 
 Consequently, 
 Normal and NIG models agree on the significance of the estimates, 
 except for the interaction of age and cohort 1958. 
 
\begin{table}[t]
 \centering
 \fbox{
\scalebox{0.9}{
 \begin{tabular}{lrrrrrrrr}
 \multicolumn{1}{c}{} &\multicolumn{4}{c}{Normal} &  \multicolumn{4}{c}{NIG} \\
\cmidrule(r){2-5} \cmidrule(r){6-9}
 & Estimate & SE & p-lower & p-upper & Estimate & SE & p-lower & p-upper \\
\cmidrule(r){2-5} \cmidrule(r){6-9}
 Intercept                         & 68.11 & 0.66 & $<$0.001 & $<$0.001 & 70.51 & 0.71 & $<$0.001  & $<$0.001\\
 Diabetes                          & -3.10 & 0.44 & $<$0.001 & $<$0.001 & -1.82 & 0.38 & $<$0.001  & $<$0.001\\
 Years after pseudomonas infection & -0.44 & 0.04 & $<$0.001 & $<$0.001 & -0.44 & 0.04 & $<$0.001  & $<$0.001\\
 Age                               & -0.27 & 0.05 & $<$0.001 & $<$0.001 & -0.47 & 0.04 & $<$0.001  & $<$0.001\\
 Cohort 1948                       & 2.13  & 4.89 & 0.519    & 0.804    & 4.65  & 5.41 & 0.057     & 0.722\\
 Cohort 1958 			                 & -3.45 & 1.42 & 0.009    & 0.023    & -7.25 & 1.50 & $<$0.001  & $<$0.001\\
 Cohort 1978 			                 & 17.80 & 1.06 & $<$0.001 & $<$0.001 & 16.88 & 1.06 & $<$0.001  & $<$0.001\\
 Cohort 1988 			                 & 26.15 & 1.17 & $<$0.001 & $<$0.001 & 25.70 & 1.10 & $<$0.001  & $<$0.001\\
 Cohort 1998		                   & 29.17 & 1.83 & $<$0.001 & $<$0.001 & 28.47 & 1.63 & $<$0.001  & $<$0.001\\
 Pancreatic sufficiency            & 0.71  & 3.24 & 0.749    & 0.898    & 7.08  & 2.89 & $<$0.001  & 0.095 \\
 Age * Cohort 1948 	            	 & -0.08 & 0.14 & 0.444    & 0.703    & -0.03 & 0.13 & 0.553     & 0.996 \\
 Age * Cohort 1958		 				  	 & 0.08  & 0.06 & 0.125    & 0.235    & 0.28  & 0.06 & $<$0.001  & $<$0.001\\
 Age * Cohort 1978 		             & -0.80 & 0.07 & $<$0.001 & $<$0.001 & -0.72 & 0.06 & $<$0.001  & $<$0.001\\
 Age * Cohort 1988 		             & -0.76 & 0.11 & $<$0.001 & $<$0.001 & -0.77 & 0.09 & $<$0.001  & $<$0.001\\
 Age * Cohort 1998 		             & 0.44  & 0.43 & 0.244    & 0.376    & 0.43  & 0.37 & 0.152     & 0.356 \\
 Age * Pancreatic sufficiency      & 1.13  & 0.22 & $<$0.001 & $<$0.001 & 0.80  & 0.19 & $<$0.001  & $<$0.001\\
 \end{tabular}}}
 \caption{Estimates of the fixed effects for the Normal and NIG models. 
 Age is centered at 5. Cohort 1968, absence of diabetes, absence of pancreatic sufficiency are 
 the reference categories. p-lower and p-upper indicate bounds for p-values by taking into account Monte Carlo error 
 in the parameter and standard error estimates.}
 \label{tab:FEVpars}
 \end{table}

 \subsection{Progression towards end-stage renal failure}
 Our second application uses clinical data on the kidney function of 
 primary care patients from the northern English city of Salford
 who are in high-risk groups for
chronic kidney disease. The outcome variable is 
 eGFR (estimated Gromerular Filtration Rate, in mL/min per 1.73$\mbox{m}^2$ of body surface area), 
 a proxy measurement for the patient's renal function calculated as
 \begin{equation}
 \mbox{eGFR} = 175 \times \left(\frac{\mbox{SCr}}{88.4} \right)^{-1.154} \times 
 \mbox{age}^{-0.203} \times 0.742^{\mbox{I(female)}} \times 1.21^{\mbox{I(black)}},
 \end{equation}
 where $\mbox{SCr}$ stands for serum creatinine measured in $\mu$mol/L (Levey et al, 1999).
 
 The data, previously analysed by \cite{diggle2015}, contain a total of 
 392,870 measurements on 22,910 patients, for whom total follow-up time 
 ranged from zero (i.e.~only baseline data is available) to 10.0 years, 
 whilst the number of measurements of eGFR ranged from 1 to 305. 
 Amongst the 22,910 patients, 11,833 (51.7\%) were male. 
 Baseline ages ranged between 13.7 and 102.1 with a mean of 65.4. 

 Figure \ref{fig:eGFR} shows traces for eight patients, chosen to illustrate some 
 particularly challenging features of the data. The unusually high degree of 
 irregularity in the follow-up times reflects the fact that the data derive from 
routine clinical practice. In particular, some patients provided many repeated 
 measurements over a relatively short time-period, probably during episodes of 
 inter-current illness.  

 \begin{figure}[t]
 \centering
 \includegraphics[width = 0.6\linewidth]{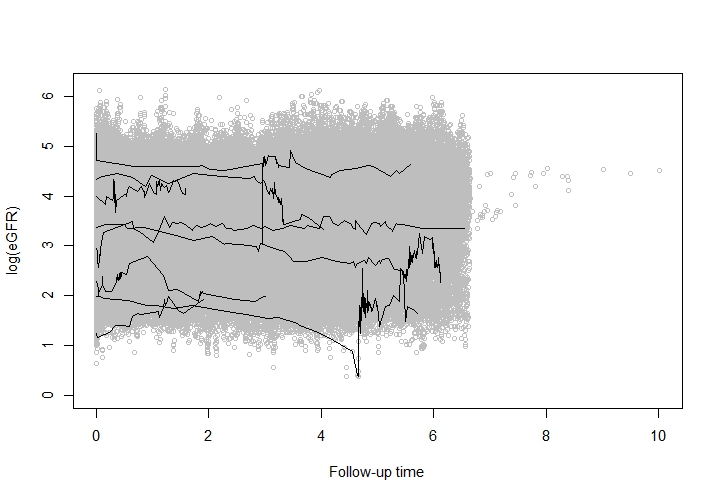}
 \caption{eGFR measurements (in log-scale) against follow-up time (in years) 
 in background as a scatter-plot. Data on eight patients are highlighted by 
 black lines connecting successive measurements.}
 \label{fig:eGFR}
 \end{figure}
    
 Clinical care guidelines in the UK include a recommendation that any person in 
 primary care who appears to be losing kidney function at a relative rate of at 
 least 5\% per year should be considered for referral to specialist secondary care. 
 Our primary objective in analysing these data is therefore to develop a method for 
 identifying, for each subject and in real-time, when this criterion is first met.
 
 As in \cite{diggle2015}, we use a log-transformed outcome variable, $Y = \log(\mbox{eGFR})$, and specify a
 model of the form 
 \begin{equation}
 Y_{ij} = \mv{x}_{ij}^{\top} \mv{\beta} + U_i + W_i(t_{ij}) + Z_{ij}.
 \label{eq:renal_mixed}
 \end{equation}
 In (\ref{eq:renal_mixed}), each $\mv{x}_{ij}$ includes 
 sex, baseline age, follow-up time ($t_{ij}$) and 
 a piece-wise linear function of age with a slope change at age 56.5.
 The processes $W_i(t) $ are integrated 
 random walks as in \cite{diggle2015}.  

 We first fit the model under Gaussian assumptions for the 
$U_i$, $W_i(\cdot)$ and $Z_{ij}$ components. The residual QQ plot
shown as Figure 5 of \cite{diggle2015} shows longer-than-Gaussian
tails. 
 As for the CF example, the source of this deviation
from the Gaussian assumptions is unknown. 
 Therefore, we proceed by assuming NIG distributions for 
each of the three stochastic components.  
 Estimates of $\nu$ based on this model are 99.93, 0.01 and 0.19 for 
 $U_i$, $W_i(\cdot)$ and $Z_{ij}$, respectively. 
 As the magnitude of $\hat{\nu}$ for $U_i$ indicates
close-to-Gaussian behaviour
 our final model assumes a Normal distribution for $U_i$, 
 and NIG distributions for $W_i(\cdot)$ and $Z_{ij}$.

 Figure \ref{fig:excursions} shows, 
 for two patients,
their observed data and the concurrent (``nowcasting'')
probabilities of 
 meeting the clinical guideline for referral to specialist
care. These are derived from the predictive distributions  
 $[Y_{ik}^*|Y_{i1}, \ldots, Y_{ik}]$, 
 where $Y_{ik}^* = Y_{ik} - Z_{ij}$; see \eqref{eq:renal_mixed}.  
 Results are shown for our preferred model, with Normally distributed 
 $U_i$, and NIG distributed
 $W_i(\cdot)$ and $Z_{ij}$ components, and 
for the corresponding Gaussian model. As would be expected, 
for each patient the general pattern of the predictive probabilities
is similar under both modelling assumptions, but there are some
substantial quantitative differences and the ranking of each
pair of predictive probabilities is not consistent.
 The two sets of model-based
predictions reflect different partitionings 
of the intra-patient variation
into signal and noise components, and the balance between the two
is affected in subtle ways
by the pattern of follow-up times and their associated
measurements.

\begin{figure}[t]
	\centering
	\includegraphics[width = 4in, height = 3in]{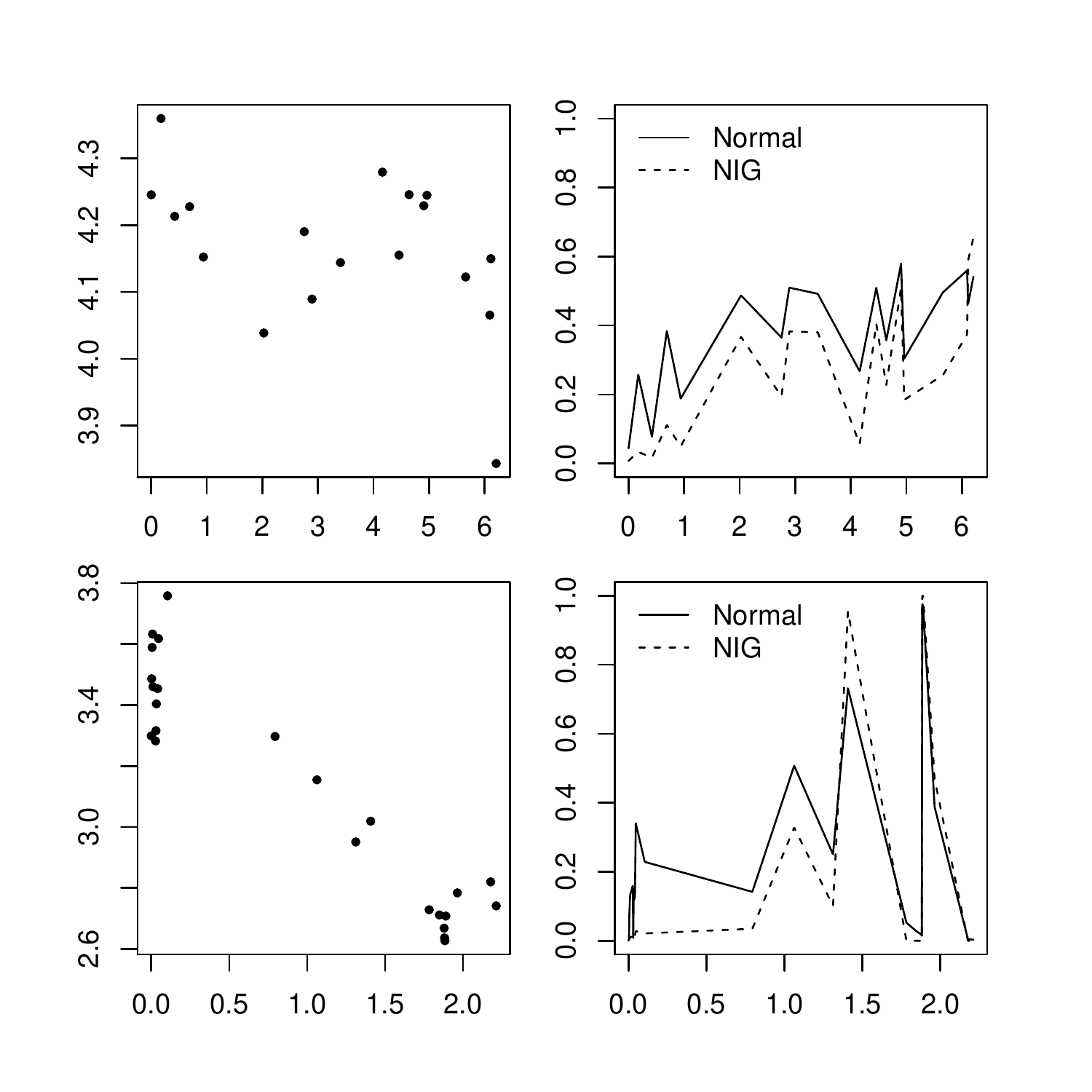}
	\caption{Left panel: follow-up time (in years) vs. log(eGFR) for two patients. 
	Right panel: probabilities of meeting the clinical guideline for the patients. }
	\label{fig:excursions}
\end{figure}

 \section{Software} 
 \label{sec:package}

 We have implemented the methodology presented in this paper 
 in the {\tt R} package {\tt ngme}. A development version of 
 the package  is available from {\tt https://bitbucket.org/davidbolin/ngme}. 
 The package includes functions for 
 parameter estimation and for subject-level prediction for the 
 class of models defined by \eqref{eq:three-component}, 
 with the following features.
 \begin{enumerate}
 \item Any linear model can be specified for the regression term, 
 $\mv{x}_{ij}^\top \mv{\beta}$, and for the subject-level random effect, 
 $\mv{d}_{ij}^\top \mv{U}_i$, using the standard {\tt R} model formula syntax.
 \item The random effects distribution can be chosen as Normal or NIG. 
 \item The covariance structure of the $W_i(t)$ can be specified as a 
 stationary, exponentially correlated  process or as a non-stationary integrated random walk, or 
 omitted altogether to give non-Gaussian versions
 of the Laird-Ware model.  The distribution for the process 
 can be specified as Normal, NIG, GAL or Cauchy. 
 \item The distribution of the measurement error terms can be specified as  Normal, NIG or t. 
 \item Subject-level predictions can be obtained either through nowcasting (conditioning on a subject's past  
 and current measurement data),
 smoothing (conditioning on all of a subject's data) or 
 forecasting (conditioning on all of a subject's past data).
 \item The generic {\tt R} functions, {\tt print}, {\tt summary}, {\tt plot}, 
 {\tt fitted} and {\tt residuals} are available for the main estimation and prediction functions.
 \item The renal data-set is included.
 \end{enumerate}

We plan to extend the package's functionality to a wider 
range of models for the stochastic process component $W_i(t)$,
including a general Mat\'{e}rn correlation strcuture.

 \section{Discussion} 
 \label{sec:discussion}

 The Gaussian version of the linear mixed model (\ref{eq:three-component}) 
 represents the standard approach to analysing real-valued repeated measurement data. 
 Typically, the simplified version without the Gaussian process term $W_i(t_{ij})$ 
 suffices when the number of follow-up times per subjects
 is small, whilst the version with the $W_i(t_{ij})$ term, often in conjunction with a
 simple random intercept $U_i$ in place of the general
 term $\mv{d}_{ij}^\top \mv{U}_i $, gives a better fit to data with long follow-up sequences.
 Concerns have often been raised about the legitimacy 
 of the Gaussian assumption, and in particular about the consequences of fitting Gaussian 
 models when elements of the underlying process have longer-than-Gaussian tails or skewness. 
 This has led to an extensive literature, which we reviewed in Section \ref{sec:literature}. 
 However, to the best of our knowledge the current paper is the first to provide a flexible 
 implementation in which departure from Gaussianity can be assessed independently for each of 
 the three stochastic components of (\ref{eq:three-component}). 
 
In our re-analysis of the cystic fibrosis data, 
 inferences on fixed effects showed only small changes when non-Gaussian
behaviour is taken into account. 
 Our re-analysis of the renal data also finds evidence of 
 non-Gaussian behaviour. which in this case 
 matters more, because it can change materially the 
 point at which individual patients in primary care are identified as meeting the accepted criterion 
 for referral to secondary care.

 We have emphasised the importance of building a computationally efficient algorithm for routine 
 maximisation of the likelihood. Arguably, computational efficiency is of secondary importance 
 in confirmatory analysis. Once the statistical analysis protocol is determined, it matters little 
 whether it takes minutes, hours or days of computing time to analyse a data-set that typically will 
 have taken weeks, months or years to collect. However, during the iterative model-building cycle 
 that characterises exploratory data analysis, the inability to fit and compare different models in 
 real-time is a severe impediment. 

 The applications described in Section \ref{sec:examples} show that the sub-sampling scheme 
 introduced in Section \ref{sec:subsampling} can perform very well. 
One topic of future research is a more thorough investigation of 
 how to optimise the sub-sampling.  
 Another is to develop graphical diagnostic tools for 
 non-Gaussian models, the need for which is
discussed in \cite{singer2017}.

Generalised linear mixed models provide a framework
 for handling non-Gaussian sampling distributions. 
This form of non-Gaussian behaviour is complementary to the
kind of non-Gaussian process behaviour that we have addressed in this article. A natural extension to our proposed models
would be to generalised linear mixed models
for binary or count data with non-Gaussian random effects.
 However, non-Gaussian behaviour will naturally be harder to detect from count or binary data than from measurement data. Binary data in particular can  be considered as a heavily censored version of measurement data. For example, a logistic regression model can be interpreted as a linear regression model for a real-valued response $Y$ in which only the sign of $Y$ is observed.

 Clinical repeated measurement data are often coupled with 
time-to-event outcomes, e.g. death. So called {\it joint models} for
repeated measurement and time-to-event outcomes have been widely studied; for a recent book-length account, see \cite{rizopoulos2012}.  However, essentially all of this literature assumes that 
any random effect components are Gaussian.
 A natural way of 
extending the methodology presented in this paper to 
 joint modelling problems, by analogy with much of the current literature on Gaussian joint models,
would be to combine the linear mixed model  \eqref{eq:three-component} with a log-linear Cox process 
model for the time-to-event outcome, in which the stochastic process $W_i(t)$ in the repeated measurement 
sub-model is correlated with a second stochastic process, $W_i^*(t)$ say, such that
$\exp\{W_i^*(t)\}$ constitutes a time-dependent frailty for the $i$th subject. 

Another possible extension of the methodology presented in this paper would be to multivariate settings, in which
more than one repeated biomarker measurement is collected 
for each patient, sometimes with different follow-up schedules for different biomarkers. 



 
 








\begin{appendix}
\section*{Appendices}

\section{Sampling the variance components}
The conditional
distributions of the random effects $\mv{U}_i$ and 
stochastic processes $\mv{W}_i$, given the variance components and the 
data $\mv{Y}_i$, are Normal, as stated in section \ref{sec:HierRep}. Thus sampling these variables given the variance components is straightforward. For the Gibbs sampler  used within the stochastic gradient
algorithm (Section \ref{sec:stochastic_gradient})
we also need the distribution of the variance components conditional 
on $\mv{U}_i,\mv{Y}_i$ and $\mv{W}_i$. If the distributions of the variance components are GIG then so are the conditional distribution. For brevity we only show here the conditional distribution when
\begin{align*}
 V_{ij}^Z  &\sim {\rm IG}(\nu^Z, \nu^Z), \\
 V_i^U  &\sim {\rm IG}(\nu^U, \nu^U), \\
 V_{ij}^W  &\sim {\rm IG}(h^2_{ij}\nu^W, \nu^W).
\end{align*}
The above distributional assumptions
 imply that {\it a priori} 
the noise and random effect components have NIG distributions
 and the process $\mv{W}_i$ is a NIG processes.  The resulting conditional distributions are:
\begin{align*}
V^U_i | \mv{U}_i, \mv{\Theta}&\sim GIG\left(-\frac{d^U+1}{2}, \nu^U + \left(\mv{\mu}^U \right)^{\top} \mv{\Sigma}^{-1}\mv{\mu}^U  ,\nu^U + \left(\mv{U}_i + \mv{\mu}^U \right)^{\top} \mv{\Sigma}^{-1} \left(\mv{U}_i + \mv{\mu}^U \right)\right), \\
\mv{V}^Z_{i} |\mv{U}_i,\mv{W}_i,Y_{i}, \mv{\Theta}&\sim GIG\left(-\mv{1} ,\mv{1} \cdot \nu^Z , \nu^Z+ \frac{\left(Y_{i} - \mv{x}_{i}\mv{\beta} + \mv{d}_{i} \mv{U}_i - \mv{A}_{ij} \mv{W}_i\right)^2}{\sigma^2}\right), \\
\mv{V}^W_i | \mv{W}_i, \mv{\Theta}&\sim GIG\left(-\mv{1},\mv{1} \cdot \left( \nu^W + (\mu^W)^2 \right), \left(\mv{K}\mv{W}_i  + \mv{h} \mu^w \right)^2 + \nu^W\mv{h}^2\right),
\end{align*}
where $d^U$ is the dimension of $\mv{U}_i$.
In the second and third line above,
 the individual elements of $\mv{V}_i^Z$ and $\mv{V}_i^W$
are independent with distributions ${\rm GIG}(p_i,a_i,b_i)$, and 
the $\cdot^2$ operation is applied element-wise. This can easily be generalized to the full GIG family.

\section{Gradients and Observed Fisher Information}
In this section, we present the gradients and Hessians required for the parameter estimation. To do this, we need
some results from matrix calculus. 
The \emph{vec}-operator transforms a matrix into a vector by stacking its columns.  
The \emph{vech} operator also transforms an $n\times n$ 
matrix into a vector but  removes all the subdiagonal elements. 
Finally, the duplication matrix, $\mv{D}_n$, 
is such that for any symmetric matrix $\mv{A}$,
$
\mv{D}_n ve{\rm CH}(\mv{A}) = vec(\mv{A}). 
$ 
For a detailed description of these redsults see \cite{magnus2007}. 

\subsection{Random effects parameters}
For the random effects parameters we derive the gradient and the Observed Fisher information matrix for $\mv{\Sigma},\mv{\mu}$, and $\mv{\beta}$. Here, we write the random effect $U$
conditional on the random variance $V$ as
$$
\mv{U} = - {\rm E}[V^U] \mv{\mu} + \mv{\mu}V^U + \mv{\Sigma}^{1/2}\mv{Z},
$$
where $\mv{Z} \sim {\rm N}(\mv{0},\mv{I}_d)$. This deviates slightly from the form used in the main bdy of
the paper, where we assumed that
 ${\rm E}[V^U]=1$. Here, we add
the term $-{\rm E}[V^U]\mu$ to ensure that $\mv{U}$ has zero expectation. If the expectation of $V^U$ is unbounded this is not possible, and we would either drop the term
or replace it with the mode. To simplify the presentation,  we 
here assume that the expectation exists. 

The relevant part of the likelihood
for the random effect parameters is 
$$
 \mv{L}(\mv{\Theta}; \mv{y}, \mv{U}, \mv{V}) \propto|\mv{\Sigma}|^{-N/2}  \exp \left(-\sum_{i=1}^N \frac{1}{2V^U_i}\left(  \mv{U}_i + {\rm E}[V^U_i]\mv{\mu} - V^U_i \mv{\mu} \right)^{T}\mv{\Sigma}^{-1} \left( \mv{U}_i + {\rm E}[V^U_i]\mv{\mu} - V^U_i \mv{\mu} \right) \right)
$$

The gradient vector and observed information matrix for the variance matrix  $\mv{\Sigma}$
 can be derived  as
\begin{align*}
\nabla_{ve{\rm CH}(\mv{\Sigma})} \log \mv{L}(\mv{\Theta}; \mv{y}, \mv{U}, \mv{V}) &= \frac{1}{2} \mv{D}_d^{\top} \left( \mv{\Sigma}^{-1} \otimes \mv{\Sigma}^{-1} \right) vec( \sum_{i=1}^N  \frac{\mv{M}_i}{V^U_i} -   N\mv{\Sigma}), \\
E[ \nabla_{ve{\rm CH}(\mv{\Sigma})} \nabla_{ve{\rm CH}(\mv{\Sigma})} \log \mv{L}(\mv{\Theta}; \mv{y}, \mv{U}, \mv{V}) ] &= N
\mv{D}_d^{\top} \left( \mv{\Sigma}^{-1} \otimes \mv{\Sigma}^{-1} \right) \mv{D}_d.
\end{align*}
where $\mv{M}_i  = \left( \mv{U}_i +{\rm E}[V^U_i] \mv{\mu} - V^U_i \mv{\mu} \right) \left( \mv{U}_i +{\rm E}[V^U_i] \mv{\mu} - V^U_i \mv{\mu} \right)^{\top}$ and $\mv{D}_d$ is the duplication matrix
 \citep[p.389-390]{magnus2007}.
The gradient and the observed information matrix for $\mu$  are
\begin{align*}
\nabla_{\mv{\mu}} \log \mv{L}(\mv{\Theta}; \mv{y}, \mv{U}, \mv{V})  &=  \sum_{i=1}^N \frac{(-{\rm E}[V^U_i]+ V^U_i)}{V^U_i} \mv{\Sigma}^{-1} \left( \mv{U}_i + {\rm E}[V^U_i]\mv{\mu} - V^U_i \mv{\mu} \right),   \\
E[ \nabla_{\mv{\mu}} \nabla_{\mv{\mu}} \log \mv{L}(\mv{\Theta}; \mv{y}, \mv{U}, \mv{V})  ] &=(\sum_{i=1}^N {\rm E}[\frac{(V^U_i -{\rm E}[V^U_i])^2}{V^U_i}])  \mv{\Sigma}^{-1}  \\
&=   \mv{\Sigma}^{-1}  \sum_{i=1}^N - {\rm E}[V^U_i]^2  {\rm E}[1/V_i^U] - {\rm E}[V^U_i].   
\end{align*}
We can only compute these if ${\rm E}[1/V_i^U] $ and ${\rm E}[V^U_i]$ are bounded. This excludes the Gamma 
and inverse Gamma distributions for $V$, for which
 ${\rm E}[\frac1{V_i^U}]$ and ${\rm E}[V^U_i]$, respectively,
are unbounded
in parts of the parameter space. 

\subsection{Regression and noise parameters}
We now consider
the fixed effect parameters $\beta$ and the measurement
noise  variance $\sigma^2$. The relevant part of the likelihood for 
these parameters is
$$
\mv{L}(\mv{\Theta}; \mv{y}, \mv{U}, \mv{V}) \propto  \sigma^{- \sum_{i=1}^N m_i}  \exp \left(-\sum_{i=1}^N \frac{1}{2\sigma^2} \mv{e}^{\top}_{i}\mv{e}_{i} \right).
$$
where $\mv{e}_{i} = \mv{y}_{i} - \mv{x}_i\mv{\beta} - \mv{d}_i\mv{U}_i - \mv{A}_i \mv{W}_i$. 
The gradient and observed information of  the measurement noise standard deviation, $\sigma$, is 
\begin{align*}
\nabla_{\sigma} \log \mv{L}(\mv{\Theta}; \mv{y}, \mv{U}, \mv{V}) &= - \sum_{i=1}^N  \frac{m_i}{\sigma}  + \frac{1}{\sigma^3} \sum_{i=1}^N  \left(\mv{e}_i \cdot \frac{1}{\mv{V}^Z} \right)^{\top} \mv{e}_i, \\
\E [\nabla_{\sigma} \nabla_{\sigma}  \log \mv{L}(\mv{\Theta}; \mv{y}, \mv{U}, \mv{V})] &= \sum_{i=1}^N \frac{m_i}{\sigma^2} - \frac{3}{\sigma^4} \sum_{j=1}^{m_j} \E \left[\frac{e_{ij}^2}{V^Z_{ij}}\right] = -2\sum_{i=1}^N \frac{m_i}{\sigma^2}.
\end{align*}
For the fixed effect the gradient and the observation matrix is

\begin{align*}
\nabla_{\mv{\beta}} \log \mv{L}(\mv{\Theta}; \mv{y}, \mv{U}, \mv{V}) &= - \frac{1}{\sigma^2} \sum_{i=1}^N  \mv{x}_i  \cdot \left( \frac{1}{\mv{V}^Z} \right)^{\top} \mv{e}_i, \\
\E [\nabla_\mv{\beta} \nabla_{\mv{\beta}}  \log \mv{L}(\mv{\Theta}; \mv{y}, \mv{U}, \mv{V})] &=- \frac{1}{\sigma^2} \sum_{i=1}^{N} \E \left[\frac{1}{V^Z}\right]\mv{x}_i\mv{x}^{\top}.
\end{align*}
Note that the computation of the gradient for the fixed effect
parameters can be moved to the random effect part. This choice is important since it affets the variance of the gradient. 
This choice is the same as that between central and non-central parametrizations in  MCMC; see \cite{Papas2007}. In the package {\tt ngme}, both choices are used, and the two gradients are weighted according to their observed information matrices.
\subsection{Process parameters}
The relevant part of the likelihood for the process
parameters is
$$
\mv{L}(\mv{\Theta}; \mv{y}, \mv{U}, \mv{V}) \propto|\mv{K}|^{N}  \exp \left(-  \frac{1}{2}\sum_{i=1}^N\left(  \mv{E}_i + \mv{h}\mu -  \mv{V}^W_i\mu  \right)^{T} diag(\mv{V}_i^W)^{-1} \left( \mv{E}_i + \mv{h}\mu - \mv{V}^W_i \mu \right) \right),
$$
where $\mv{E}_i = \mv{KW}_i$. We start by presenting the general form for differentiation of the likelihood with respect to a generic operator parameter, $\theta$. Since the results follow from standard matrix calculus, we omit the details of the computations.
We define the matrix $\mv{K}_\theta$ to have elements
 $(\mv{K}_\theta)_{ij} = \frac{dK_{ij}}{d\theta}$. The gradient is
then given by
\begin{align*}
\nabla_{\theta} \log \mv{L}(\mv{\Theta}; \mv{y}, \mv{U}, \mv{V}) =&  N {\bf tr}\left(\mv{K}_\theta \mv{K}^{-1}\right) - \sum_{i=1}^N \mv{W}_i diag(\mv{V}_i^W)^{-1} \left( \mv{E}_i + \mv{h}\mu - \mv{V}^W_i \mu \right).  
\end{align*}
The cost of computing the observation matrix is prohibitive,
so we omit it.
For the shift parameter, the gradient and the observed fisher information matrix are given by
\begin{align*}
\nabla_{\mv{\mu}} \log \mv{L}(\mv{\Theta}; \mv{y}, \mv{U}, \mv{V})  &=  \sum_{i=1}^N\left(   \mv{h} -  \mv{V}^W_i \right)^{T} diag(\mv{V}_i^W)^{-1} \left( \mv{E}_i + \mv{h}\mu - \mv{V}^W_i \mu \right) , \\
E[ \nabla_{\mv{\mu}} \nabla_{\mv{\mu}} \log \mv{L}(\mv{\Theta}; \mv{y}, \mv{U}, \mv{V})  ] &=\sum_{i=1}^N \sum_{j=1}^n {\rm E}[\frac{(\mv{V}^W_{i,j} -\mv{h}_j)^2}{\mv{V}^U_{i,j}}]   \\
&=   \sum_{i=1}^N \sum_{j=1}^n  -\mv{h}_j^2  {\rm E}[\frac1{\mv{V}_{ij}^U}] - \mv{h}_j. 
\end{align*}
\subsection{Variance parameters}
because we sample all variances $V$ 
associated with the stochastic components in the model, the gradients for the parameters of the
different variances
only depend on the form of the distribution for each specific parameter. For this reason, we only present the result for the process
 variances. Results for the other variance parsameters are 
derived in the same way.
For the NIG process, we only have the parameter $\nu^Z$. The relevant part of the likelihood is

$$
\mv{L}(\mv{\Theta}; \mv{y}, \mv{U}, \mv{V}) \propto \sum_{i=1}^N \nu^{n/2} \exp \left(-0.5  \nu\mv{h}^{\top}_i \left( \frac{\mv{h}_i}{\mv{V}^W_i}\right) - 0.5 \nu \mv{1}^{\top} \mv{V}^W_i + \nu \mv{1}^{\top}\mv{h}_i  \right).
$$
Thus the gradient and the observed Fisher information are
\begin{align*}
\nabla_{\nu} \log \mv{L}(\mv{\Theta}; \mv{y}, \mv{U}, \mv{V}) &=  \sum_{i=1}^N  \frac{n}{2\nu} -0.5  \mv{h}^{\top}_i \left( \frac{\mv{h}_i}{\mv{V}^W_i}\right) - 0.5 \mv{1}^{\top} \mv{V}^W_i + \mv{1}^{\top} \mv{h}_i  \\
{\rm E}\left[\nabla_{\nu} \log \mv{L}(\mv{\Theta}; \mv{y}, \mv{U}, \mv{V}) \right]&=  -\sum_{i=1}^N  \frac{n}{2\nu^2}.
\end{align*}

\end{appendix}


\begin{thebibliography}{10}

	\bibitem[Andrieu et al.(2007)Andrieu, Moulines and Priouret]{andrieu2007stability}
  Andrieu C., Moulines \'E. and Priouret P. (2007).
	Stability of stochastic approximation under verifiable conditions.
	\emph{SIAM Journal on Control and Optimization},
	\textbf{44},
	283--312.

\bibitem[Aralleno-Valle et al.(2007)Aralleno-Valle, Bolfarine and Lachos]{aralleno_valle2007}
Aralleno-Valle R. B., Bolfarine H. and Lachos V. H. (2007).
Bayesian inference for skew-Normal linear mixed models.
\emph{Journal of Applied Statistics}
\textbf{34}, 
663--682.

\bibitem[Asar et al.(2016)Asar, Ritchie, Kalra and Diggle]{asar2016}
Asar {\"O}., Ritchie JP, Kalra PA and Diggle PJ (2016).
Short-term and long-term effects of acute kidney injury in 
chronic kidney disease patients: A longitudinal analysis. 
\emph{Biometrical Journal}, 
\textbf{58(6)}, 
1552--1566.


\bibitem[Bai et al.(2016)Bai, Chen and Yao]{bai2016}
Bai X., Chen K. and Yao W. (2016).
Mixture of linear mixed models using multivariate $t$ distribution. 
\emph{Journal of Statistical Computation and Simulation}
\textbf{86}, 
771--787.

\bibitem[Barndorff-Nielsen(1977)]{barndorffnielsen1977}
Barndorff-Nielsen O.E. 1977. Exponentially decreasing distributions
for the logarithm of the particle size. 
{\it Proceedings of the Royal Society London Series A. Mathematical and Physical Sciences},
{\bf 353}, 401--419.

\bibitem[Barndorff-Nielsen(1997a)]{barndorffnielsen1997a}
Barndorff-Nielsen, O. (1997a),
Processes of normal inverse Gaussian type.
{\it Finance and Stochastics}, {\bf 2}, 41--68

\bibitem[Barndorff-Nielsen(1997b)]{barndorffnielsen1997b}
Barndorff-Nielsen, O. (1997b),
Normal Inverse Gaussian Distributions and Stochastic Volatility Modelling.
{\it Scandinavian Journal of Statistics}, {\bf 24}, 1--13

\bibitem[Bolin(2014)]{bolin2014}
Bolin, D. (2014).
Spatial {M}at\'ern fields driven by non-{G}aussian noise.
{\it Scandinavian Journal of Statistics},  {\bf 41},  557--579.

\bibitem[Bibby and  S{\o}rensen(2003)]{Bibby2003}
Bibby B. and  S{\o}rensen M.(2003),
Hyperbolic processes in finance.
\emph{Handbook of heavy tailed distributions in finance}, 319--337.
Berlin.

\bibitem[Cabral et al.(2012)Cabral, Lachos and Madruga]{cabral2012}
Cabral C. R., Lachos V. H. and Madruga M. R. (2012).
Bayesian analysis of skew-Normal independent linear mixed models with heterogeneity 
in the random-effects population. 
\emph{Journal of Statistical Planning and Inference}
\textbf{142}, 
181--200.

\bibitem[Chang and Zimmerman(2016)Chang and Zimmerman]{chang2016}
Chang S.-C. and Zimmerman D. L. (2016).
Skew-Normal antedependence models for skew longitudinal data.  
\emph{Biometrika}
\textbf{2016}, 
1-14, 
doi:10.1093/biomet/asw006. 

\bibitem[Choudhary et al.(2014)Choudray, Sengupta and Cassey]{choudhary2014}
Choudhary P. K., Sengupta D. and Cassey P. (2014).
A general skew-t mixed model that allows different degrees of freedom for random effects and error distribution.
\emph{Journal of Statistical Planning and Inference}
\textbf{147}, 
235--247.

\bibitem[Davies and Alton(2009)Davies and Alton]{davies2009}
Davies J. C. and Alton E. W. (2009).
Monitoring respiratory disease severity in cystic fibrosis.
\emph{Respiratory Medicine}
\textbf{54}, 
606--617.

\bibitem[Davidian and Gallant(1993)Davidian and Gallant]{davidian1993}
Davidian M. and Gallant A. R. (1993).
The nonlinear mixed effects models with a smooth random effects density. 
\emph{Biometrika}
\textbf{80}, 
475--488.

\bibitem[De la Cruz(2014)De la Cruz]{delacruz2014}
De la Cruz R. (2014).
Bayesian analysis for nonlinear mixed-effects models under heavy-tailed distributions.
\emph{Pharmaceutical Statistics}
\textbf{13}, 
81--93.

\bibitem[Dempster et al.(1977)Dempster, A. P., Laird, N. M., Rubin, D. B.]{dempster1977}
Dempster A. P., Laird, N. M., Rubin, D. B. (1977).
Maximum likelihood from incomplete data via the EM algorithm. 
\emph{Journal of the Royal Statistical Society - Series B (Methodological)}
\textbf{39},
1--38.

\bibitem[Diggle(1988)Diggle]{diggle1988}
Diggle P. J. (1988).
An approach to the analysis of repeated measurements. 
\emph{Biometrics}
\textbf{44},
959--971.

\bibitem[Diggle et al.(1994)Diggle, P. J., Liang, K.-Y., and Zeger, S. L.]{diggle1994}
Diggle P. J., Heagerty P. J., Liang K.-Y., and Zeger S. L. (1994).
\emph{Analysis of Longitudinal Data}.
Oxford University Press, Oxford.

\bibitem[Diggle et al.(2002)Diggle, P. J., Heagerty, P. J., Liang, K.-Y., and Zeger, S. L.]{diggle2002}
Diggle P. J., Heagerty P. J., Liang K.-Y., and Zeger S. L. (2002).
\emph{Analysis of Longitudinal Data, 2nd edition}.
Oxford University Press, Oxford.

\bibitem[Diggle et al.(2015)Diggle P. J., Sousa I., and Asar {\"O}.]{diggle2015}
Diggle P. J., Sousa I., and Asar {\"O}. (2015).
Real-time monitoring of progression towards renal failure in primary care patients. 
\emph{Biostatistics}
\textbf{16(3)},
522-536.

\bibitem[Eberlein(2001)]{eberlein2001application}
Eberlein E. (2001).
Application of generalized hyperbolic L{\'e}vy motions to finance
\emph{L{\'e}vy processes: theory and applications},
319--337.


\bibitem[Fitzmaurice et al.(2011)Fitzmaurice, G. M., Laird, N. M., and Ware, J. H.]{fitzmaurice2011}
Fitzmaurice G. M., Laird N. M., and Ware J. H. (2011).
\emph{Applied Longitudinal Analysis, 2nd edition}.
John Wiley \& Sons, New Jersey.

\bibitem[Ghidey et al.(2004)Ghidey, Lesaffre and Eilers]{ghidey2004}
Ghidey W., Lesaffre E. and Eilers P. (2004).
Smooth Random Effects Distribution in a Linear Mixed Model.
\emph{Biometrics}
\textbf{60},
945--953.

\bibitem[Henderson et al.(2000)Henderson, Diggle and Dobson]{henderson2000}
Henderson R., Diggle P. and Dobson A. (2000).
Joint modelling of longitudinal measurements and event time data.
\emph{Biostatistics}
\textbf{1}, 
465--480.

\bibitem[Ho and Lin(2010)Ho and Lin]{ho2010}
Ho H. L. and Lin T-I. (2010).
Robust linear mixed models using the skew $t$ distribution with application to schizophrenia data. 
\emph{Statistics in Medicine}
\textbf{52}, 
449--469.

\bibitem[Jara et al.(2008)Jara, Quintana and Mart\'{i}n]{jara2008}
Jara A., Quintana F. and Mart\'{i}n E. S. (2008).
Linear mixed models with skew-elliptical distributions: a Bayesian approach.
\emph{Computational Statistics and Data Analysis}
\textbf{52}, 
5033--5045.

\bibitem[Jennrich and Schluchter(1986)Jennrich, R. I. and Schluchter, M. D.]{jennrich1986}
Jennrich R. I. and Schluchter M. D. (1986).
Unbalanced repeated-measures models with structured covariance matrices. 
\emph{Biometrics}
\textbf{42(4)},
805--820.

\bibitem[J{\o}rgensen(1982)]{jorgensen2012}
J{\o}rgensen B. 1982. 
Statistical properties of the generalized inverse Gaussian distribution.
{\it Lecture Notes in Statistics, Springer},
{\bf 9}, 401--419

\bibitem[Kay et al.(1987)Kay, Fiszbein and Opler]{kay1987}
Kay S. R., Fiszbein A. and Opler L. A. (1987).
The Positive and Negative Syndrome Scale (PANSS) for schizophrenia.
\emph{Schizophrenia Bulletin}
\textbf{13}, 
261--276.

\bibitem[Kazemi et al.(2013)Kazemi, Mahdiyeh, Mansourian and Park]{kazemi2013}
Kazemi I., Mahdiyeh Z., Mansourian M. and Park J. J. (2013).
Bayesian analysis of multivariate mixed models for a prospective cohort study using skew-elliptical distributions. 
\emph{Biometrical Journal}
\textbf{55}, 
495--508.

\bibitem[Kleinman and Ibrahim(1998)Kleinman and Ibrahim]{kleinman1998}
Kleinman K. P. and Ibrahim J. G. (1998).
A semiparametric Bayesian approach to the random effects model.
\emph{Biometrics}
\textbf{54}, 
921--938.

 \bibitem[Koller(2016)Koller]{robustlmm}
Koller, M. (2016).
robustlmm: an R package for robust estimation of linear mixed-effects models.
\emph{Journal of Statistical Software},
1--24.

\bibitem[Koller and Stahel(2016)Koller and Stahel]{koller16}
Koller, M. and Stahel, W.A. (2016).
Nonsingular subsampling for regression S estimators with categorical predictors.
\emph{Computational Statistics},
1--16.

\bibitem[Kushner and Yin(2003)Kushner and Yin]{Kushener2003}
Kushner, H. and Yin, G. (2003).
\emph{Stochastic Approximation and Recursive Algorithms and Applications}.
Springer.

\bibitem[Lange(1995)Lange]{lange1995}
Lange K. (19995).
A Gradient Algorithm Locally Equivalent to the EM Algorithm. 
\emph{Journal of the Royal Statistical Society - Series B}
\textbf{57},
425--437.

\bibitem[Lachos et al.(2009)Lachos, Dey and Cancho]{lachos2009}
Lachos V. H., Dey D. K. and Cancho V. G. (2009).
Robust linear mixed models with skew-Normal independent distributions from a Bayesian perspective. 
\emph{Journal of Statistical Planning and Inference}
\textbf{139},
4098--4110.

\bibitem[Lachos et al.(2010)Lachos, Ghosh and Arellano-Valle]{lachos2010}
Lachos V. H., Ghosh P. and Arellano-Valle R. B. (2010).
Likelihood based inference for skew-Normal independent linear mixed models.
\emph{Statistica Sinica}
\textbf{20},
302--322.

\bibitem[Lachos, Bandyopadhyay and Dey(2011)Lachos, Bandyopadhyay and Dey]{lachos2011}
Lachos V. H., Bandyopadhyay D. and Dey D. K. (2011).
Linear and nonlinear mixed-effects models for censored HIV viral loads using Normal/independent distributions. 
\emph{Biometrics}
\textbf{67}, 
1594--1604.

\bibitem[Lachos et al.(2012)Lachos, Cabral and Abanto-Valle]{lachos2012}
Lachos V. H., Cabral C. R. B. and Abanto-Valle C. A. (2012).
A non-iterative sampling Bayesian method for linear mixed models with Normal independent distributions. 
\emph{Journal of Applied Statistics}
\textbf{39}, 
531--549.

\bibitem[Lachos et al.(2013)Lachos, Castro and Dey]{lachos2013}
Lachos V. H., Castro L. M. and Dey D. K. (2013).
Bayesian inference in nonlinear mixed-effects models using Normal independent distributions. 
\emph{Computational Statistics and Data Analysis}
\textbf{64}, 
237--252.

\bibitem[Laird and Ware(1982)Laird, N. M. and Ware, J. H.]{laird1982}
Laird, N. M. and Ware, J. H. (1982).
Random-effects models for longitudinal data.
\emph{Biometrics}
\textbf{38},
963--974.

\bibitem[Lange et al.(1989)Lange, Little and Taylor]{lange1989}
Lange K. L., Little R. J. A. and Taylor J. M. G. (1989).
Robust statistical modeling using the $t$ distribution. 
\emph{Journal of the American Statistical Association}
\textbf{84},
881--896.

\bibitem[Lange and Sinsheimer(1995)Lange and Sinsheimer]{lange1995}
Lange, K.  and Sinsheimer, J.S. (1995).
Normal/independent distributions and their applications
in robust regression.
\emph{Journal of Computational and Graphical Statistics}
\textbf{2}, 
175--198.

\bibitem[Levey et al(1999)Levey et al]{levey1999}
Levey A. S., Bosch J. P., Lewis J. B., Greene T., Rogers N. and Roth D. (1999). A more accurate method to estimate glomerular filtration rate from serum creatinine: a new prediction equation. 
{\it Annals of Internal Medicine} {\bf 130}, 461–470.
\bibitem[Liang and Zeger(1986)Liang and Zeger]{liang1986}
Liang K.-Y. and Zeger S. L. (1986).
Longitudinal data analysis using generalized linear models. 
\emph{Biometrika}
\textbf{73}, 
13--22.

\bibitem[Lin and Lee(2007)Lin and Lee]{lin2007}
Lin T. I. and Lee J. C. (2007).
Bayesian analysis of hierarchical linear mixed modeling using the multivariate t distribution. 
\emph{Journal of Statistical Planning and Inference}
\textbf{137}, 
484--495.

\bibitem[Lin and Lee(2008)Lin and Lee]{lin2008}
Lin T. I. and Lee J. C. (2008).
Estimation and prediction in linear mixed models with skew-Normal random effects for longitudinal data. 
\emph{Statistics in Medicine}
\textbf{27}, 
1490--1507.

\bibitem[Lin and Wang(2011)Lin and Wang]{lin2011}
Lin T.-I. and Wang W. L. (2011).
Bayesian inference in joint modelling of location and scale parameters of the $t$ distribution for longitudinal data. 
\emph{Journal of Statistical Planning and Inference}
\textbf{141}, 
1543--1553.

\bibitem[Lin and Wang(2013)Lin and Wang]{lin2013}
Lin T.-I. and Wang W. L. (2013).
Multivariate skew-Normal linear mixed models for multi-outcome longitudinal data. 
\emph{Statistical Modelling}
\textbf{13}, 
199--221.

\bibitem[Lindgren et al.(2011)Lindgren, Rue and Lindstr\"om]{lindgren2011}
Lindgren F., Rue H. and Lindstr\"om J.(2011).
An explicit link between Gaussian fields and Gaussian
Markov random fields: the stochastic partial differential equation approach (with discussion). 
\emph{Journal of the Royal Statistical Society, Series B (Methodological)}
\textbf{73}, 
423--498.

\bibitem[Lindgren and Rue(2008)Lindgren and Rue]{lindgren2008}
Lindgren F. and Rue H.(2008).
On the second-order random walk model for irregular locations.
\emph{Scandinavian journal of statistics}
\textbf{35}, 
691--700.

\bibitem[Liu and Rubin(1995)Liu and Rubin]{liu1995}
Liu C. and Rubin D. B. (1995).
ML estimation of the $t$ distribution using EM and its extensions, ECM and ECME. 
\emph{Statistica Sinica}
\textbf{5}, 
19--39.

\bibitem[Louis(1982)Louis]{Louis1982}
Louis T. A. (1982).
Finding the Observed Information Matrix when Using the EM Algorithm.
\emph{Journal of the Royal Statistical Society, Series B (Methodological)}
\textbf{44},
226--233.

\bibitem[Lu and Zhang(2014)Lu and Zhang]{lu2014}
Lu Z. and Zhang Z. (2014).
Robust growth mixture models with non-ignorable missingness: models, estimation, selection, and application. 
\emph{Computational Statistics and Data Analysis}
\textbf{71}, 
220--240.

\bibitem[Magnus and Neudecker(2007)Magnus and Neudecker]{magnus2007}
Magnus J. R. and Neudecker H. (2007).
{\it Matrix Differential Calculus With Applications in Statistics and Econometrics, 3rd edition}. 
Chichester: John Wiley \& Sons.

\bibitem[Matheson and Winkler(1976)Matheson and Winkler]{matheson1976scoring}
Matheson J. and Winkler R. (1976).
Scoring rules for continuous probability distributions.
\emph{Management Science}
\textbf{22}
1087--1096.

\bibitem[Matos et al.(2013)Matos, Prates, Chen and Lachos]{matos2013}
Matos L. A., Prates M. O., Chen M.-H. and Lachos V. H. (2013).
Likelihood-based inference for mixed-effects models with censored response using the multivariate-t distribution. 
{\it Journal of Computational and Graphical Statistics}
{\bf 10},
249--276.

\bibitem[Meza et al.(2012)Meza, Osorio and De la Cruz]{meza2012}
Meza C., Osorio F. and De la Cruz R. (2012).
Estimation in nonlinear mixed-effects models using heavy-tailed distributions.
\emph{Statistics and Computing}
\textbf{22}, 
121--139.

\bibitem[Molenberghs and Verbeke(2005)Molenberghs and Verbeke]{molenberghs+verbeke2005}
Molenberghs G. and Verbeke G. (2005). {\it Models for Discrete Longitudinal Data}.
New York: Springer
 
\bibitem[Osorio(2016)Osorio]{osorio2016}
Osorio F. (2016).
heavy: Robust estimation using heavy-tailed distributions. 
R package version 0.3, URL http://cran.r-project.org/package=heavy.

\bibitem[Papaspiliopoulos et al.(2007)Papaspiliopoulos, Roberts and Sk{\"o}ld,]{Papas2007}
Papaspiliopoulos O., Roberts G., and Sk{\"o}ld M. (2007).
A general framework for the parametrization of hierarchical models
{\it Statistical Science}
{\bf 1},
59--73.

\bibitem[Pinheiro et al.(2001)Pinheiro, Liu and Wu]{pinheiro2001}
Pinheiro J. C., Liu C., and Wu Y. N. (2001).
Efficient algorithms for robust estimation in linear mixed-effects models using the multivariate t distribution. 
{\it Journal of Computational and Graphical Statistics}
{\bf 10},
249--276.

\bibitem[Podg{\'o}rski and Wallin(2016) Podg{\'o}rski and Wallin]{podgorski2016convolution} 
Podg{\'o}rski K. and Wallin J. (2016).
	Convolution-invariant subclasses of generalized hyperbolic distributions.
	{\it Communications in Statistics-Theory and Methods}
	{\bf 45},
	98--103.


\bibitem[Riquelme et al.(2015)Riquelme, Bolfarine and Galea]{riquelme2015}
Riquelme M., Bolfarine H. and Galea M. (2015).
Robust linear functional model. 
{\it Journal of Multivariate Analysis}
{\bf 134},
82--98.

\bibitem[Rizopoulos(2012)Rizopoulos]{rizopoulos2012}
Rizopoulos D. (2012).
{\it Joint Models for Longitudinal and 
Time-to-Event Data: With Applications in R}. 
Boca Raton: Chapman and Hall/CRC.

\bibitem[Rosa et al.(2003)Rosa, Padovani and Gianola]{rosa2003}
Rosa G. J. M., Padovani C. R. and Gianola D. (2003).
Robust linear mixed models with Normal/independent distributions and Bayesian MCMC implementation. 
\emph{Biometrical Journal}
\textbf{45},
573--590.

\bibitem[Rosa et al.(2004)Rosa, Gianola and Padovani]{rosa2004}
Rosa G. J. M., Gianola D. and Padovani C. R. (2004).
Bayesian longitudinal data analysis with mixed models and thick-tailed distributions using MCMC. 
\emph{Journal of Applied Statistics}
\textbf{31},
855--873.

\bibitem[Singer et al.(2017)Singer, Rocha and Nobre]{singer2017}
Singer J. M., Rocha F. M. M. and Nobre J. S. (2017).
Graphical tools for detecting departures from linear mixed model assumptions and some remedial measures.
\emph{International Statistical Review}
\textbf{85(2)},
290--324.

\bibitem[Smith and Diggle(1998)Smith and Diggle]{smith1998}
Smith D. M. and Diggle P. J. (1998).
Compliance in an anti-hypertension trial: a latent process model for binary longitudinal data. 
\emph{Statistics in Medicine}
\textbf{17}, 
357--370.

\bibitem[Song et al.(2007)Song, Zhang and Qu]{song2007}
Song P. X.-K., Zhang P. and Qu A. (2007).
Maximum likelihood inference in robust linear mixed-effects linear mixed effects models using 
multivariate t distributions.
\emph{Statistica Sinica}
\textbf{17},
929--943.

\bibitem[Stirrup et al.(2015)Stirrup, O. T., Babiker, A. G., Carpenter, J. R. and Copas, A. J.]{stirrup2015}
Stirrup O. T., Babiker A. G., Carpenter J. R. and Copas A. J. (2015).
Fractional Brownian motion and multivariate-t models for longitudinal biomedical data, with application to CD4 counts in HIV-patients. 
{\it Statistics in Medicine}, 
DOI: 10.1002/sim.6788.

\bibitem[Subtil and Rabilloud(2010)Subtil and Rabilloud]{subtil2010}
Subtil F. and Rabilloud M. (2010).
Robust non-linear mixed modelling of longitudinal PSA levels after prostate cancer treatment. 
\emph{Statistics in Medicine}
\textbf{29}, 
573--587.

\bibitem[Sun et al.(2008)Sun, Frees and Rosenberg]{sun2008}
Sun J., Frees E. W. and Rosenberg M. A. (2008).
Heavy-tailed longitudinal modeling using copulas.
\emph{Insurance: Mathematics and Economics}
\textbf{42}, 
817--830.

\bibitem[Tankov (2003)]{tankov2003financial}
Tankov P. (2003),
\emph{Financial Modelling with Jump Processes}. Boca Raton: Chapman and Hall/CRC

\bibitem[Tao et al.(2004)Tao, Palta, Yandell and Newton]{tao2004}
Tao H., Palta M., Yandell B. S. and Newton M. A. (2004).
An estimation method for the semiparametric mixed effects model.
\emph{Biometrics}
\textbf{55}, 
102--110.

\bibitem[Taylor et al.(1994)Taylor, Cumberland and Sy]{taylor1994}
Taylor J. M. G., Cumberland W. G. and Sy J. P. (1994).
A stochastic process model for analysis of longitudinal AIDS data. 
\emph{Journal of the American Statistical Association}
\textbf{89}, 
727--736.

\bibitem[Taylor-Robinson et al.(2012)Taylor-Robinson, Whitehead, Diderichsen, Olesen, Pressler, Smyth and Diggle]{taylor_robinson2012}
Taylor-Robinson D., Whitehead M., Diderichsen F., Olesen H. V., Pressler T., Smyth R. L. and Diggle P. J. (2012).
Understanding the natural progression in \%FEV decline 
in patients with cystic fibrosis: a longitudinal study. 
\emph{Thorax}, 
\textbf{67}, 
860--866.

\bibitem[Tian et al.(2008)Tian, Ng and Tan]{tian2008}
Tian G.-L., Ng K. W. and Tan M. (2008).
EM-type algorithms for computing restricted MLEs in multivariate Normal distributions and multivariate $t$-dsitributions.
\emph{Computational Statistics and Data Analysis}, 
\textbf{52}, 
4768--4778.

\bibitem[Verbeke and Lesaffre(1996)Verbeke and Lesaffre]{verbeke1996}
Verbeke G. and Lesaffre E. (1996).
A linear mixed-effects model with heterogeneity in the random-effects population.
\emph{Journal of the American Statistical Association}
\textbf{91}, 
217--221.

\bibitem[Verbeke and Molenberghs(2001)Verbeke and Molenberghs]{verbeke2001}
Verbeke G. and Molenberghs G. (2001).
{\it Linear Mixed Models for Longitudinal Data}.
New York: Springer - Verlag.

\bibitem[Vilca, Balakrishnan and Zeller(2014)]{vilca2014}
Vilca, F., Balakrishnan, N. and Zeller, C.B. (2014).
Multivariate skew-Normal generalized hyperbolic distribution and its properties.
{\it Journal of Multivariate Analysis}, {\bf 128}, 74--85.

\bibitem[Vock et al.(2012)Vock, Davidian and Tsiatis]{vock2012}
Vock D. M., Davidian M. and Tsiatis A. A. (2012).
Mixed model analysis of censored longitudinal data with flexible random-effects density. 
\emph{Biostatistics}
\textbf{13}, 
61--73.

\bibitem[Wallin and Bolin(2015)Wallin and Bolin]{wallin2015}
Wallin J. and Bolin D. (2015). 
Geostatistical modelling using non-Gaussian Mat\'{e}rn fields. 
\emph{Scandinavian Journal of Statistics}
\textbf{42}, 
872--890.

\bibitem[Wang and Fan(2011)Wang and Fan]{wang2011}
Wang W.-L. and Fan T-H. (2011). 
Estimation in multivariate $t$ linear mixed models for multiple longitudinal data. 
\emph{Statistica Sinica}
\textbf{21}, 
1857--1880.

\bibitem[Wang and Fan(2012)Wang and Fan]{wang2012}
Wang W.-L. and Fan T-H. (2012). 
Bayesian analysis of multivariate $t$ linear mixed models using a combination of IBF and Gibbs sampler. 
\emph{Journal of Multivariate Analysis}
\textbf{105}, 
300--310.

\bibitem[Wang et al.(2015)Wang, Lin and Lachos]{wang2015}
Wang W.-L., Lin T.-I. and Lachos V. H. (2015).
Extending multivariate-t linear mixed models for multiple longitudinal 
data with censored responses and heavy tails. 
\emph{Statistical Methods in Medical Research}
\textbf{},
DOI: 10.1177/0962280215620229.

\bibitem[Yavuz and Arslan(2016)Yavuz and Arslan]{yavuz2016}
Yavuz F. G. and Arslan O. (2016).
Linear mixed model with Laplace distribution (LLMM).
\emph{Statistical Papers} 
DOI 10.1007/s00362-016-0763-x.

\bibitem[Zeller et al.(2010)Zeller, Labra, Lachos and Balakrishnan]{zeller2010}
Zeller C. B., Labra F. V., Lachos V. H.  and Balakrishnan N. (2010).
Influence analyses of skew-Normal/independent linear mixed models.
\emph{Computational Statistics and Data Analysis}
\textbf{54}, 
1266--1280.

\bibitem[Zhang and Davidian(2001)Zhang and Davidian]{zhang2001}
Zhang D. and Davidian M. (2001).
Linear mixed models with flexible distributions of random effects for longitudinal data.
\emph{Biometrics}
\textbf{57}, 
795--802.

\bibitem[Zhang et al.(2003)Zhang, Lai, Lu and Tong]{zhang2003}
Zhang Z., Lai K., Lu Z. and Tong X. (2003).
Bayesian inference and application of robust growth curve models using Student's t distribution. 
\emph{Structural Equation Modeling: A Multidisciplinary Journal}
\textbf{20}, 
47--78.

\bibitem[Zhang et al.(2009)Zhang, Qui, Fu and Song]{zhang2009}
Zhang P., Qui Z., Fu Y. and Song P. X.-K. (2009).
Robust transformation mixed-effects models for longitudinal continuous 
proportional data.
\emph{Canadian Journal of Statistics}
\textbf{37}, 
266--281.

\bibitem[Zhang et al.(2015)Zhang, Yu, Zhang, Roskos, Richman and Yang]{zhang2015}
Zhang J., Yu B., Zhang L., Roskos L., Richman L. and Yang H. (2015).
Non-Normal random effects models for immunogenicity assay cut point determination.
\emph{Journal of Biopharmaceutical Statistics}
\textbf{25}, 
295--306.

\end{thebibliography}
\end{document}